\begin{document}

\bibliographystyle{apsrev4-1}

\title{Approximation Algorithms for Complex-Valued Ising Models on Bounded Degree Graphs}

\author{Ryan L. Mann}
\email{mail@ryanmann.org}
\homepage{http://www.ryanmann.org}
\affiliation{Centre for Quantum Computation and Communication Technology, \\ Centre for Quantum Software and Information, \\ Faculty of Engineering \& Information Technology, University of Technology Sydney, NSW 2007, Australia}

\author{Michael J. Bremner}
\affiliation{Centre for Quantum Computation and Communication Technology, \\ Centre for Quantum Software and Information, \\ Faculty of Engineering \& Information Technology, University of Technology Sydney, NSW 2007, Australia}

\begin{abstract}
We study the problem of approximating the Ising model partition function with complex parameters on bounded degree graphs. We establish a deterministic polynomial-time approximation scheme for the partition function when the interactions and external fields are absolutely bounded close to zero. Furthermore, we prove that for this class of Ising models the partition function does not vanish. Our algorithm is based on an approach due to Barvinok for approximating evaluations of a polynomial based on the location of the complex zeros and a technique due to Patel and Regts for efficiently computing the leading coefficients of graph polynomials on bounded degree graphs. Finally, we show how our algorithm can be extended to approximate certain output probability amplitudes of quantum circuits.
\end{abstract}

\maketitle

\section{Introduction}
\label{section:Introduction} 

The Ising model partition function plays an important role in combinatorics and statistical physics. In this paper we study the problem of approximating the Ising model partition function in the complex parameter regime on bounded degree graphs. This work is motivated by the close relationship to quantum computation, where it can be shown that approximate evaluations of these partition functions can encode arbitrary quantum computations~\cite{de2011quantum}. A classic result of Jaeger, Vertigan, and Welsh~\cite{jaeger1990computational} showed that exactly evaluating these partition functions is \mbox{\textsc{\#P}-hard}. This was shown to remain true in the approximate case~\cite{goldberg2017complexity} and when restricted to graphs of bounded degree~\cite{fujii2017commuting}. Therefore, it seems unlikely that an efficient algorithm exists for approximating the partition function for general parameters on bounded degree graphs. Furthermore, it has been conjectured that this problem remains hard on average over certain classes of interactions and external fields~\cite{gao2017quantum, boixo2018characterizing, bermejo2018architectures}. Resolving these conjectures is important for understanding the complexity of quantum computing.

We establish a deterministic polynomial-time approximation scheme for the Ising model partition function on bounded degree graphs when the interactions and external fields are absolutely bounded close to zero \mbox{(Corollary~\ref{corollary:IsingModelPartitionFunctionBoundedDegreeGraphs})}. This provides a lower bound on when the interactions and external fields cause approximations of the Ising model partition function to transition from being contained in \textsc{P} to being \mbox{\textsc{\#P}-hard}. Our algorithm is based on an approach due to Barvinok~\cite{barvinok2015computing, barvinok2016computing, barvinok2016combinatorics} for approximating evaluations of a polynomial based on the location of the complex zeros and a technique due to Patel and Regts~\cite{patel2017deterministic} for efficiently computing the leading coefficients of graph polynomials on bounded degree graphs.

Barvinok's approach considers the Taylor expansion of the logarithm of a polynomial about an easy to evaluate point. Suppose that we can show that the complex zeros of the polynomial lie in the exterior of a closed disc centred at this point, then it follows that a truncated Taylor expansion provides an additive approximation to the logarithm of the polynomial at any point in the interior of this closed disc. Now observe that an additive approximation of the logarithm of a polynomial corresponds to a multiplicative approximation of the polynomial.

To construct an algorithm from this approach we need to be able to compute the coefficients of the truncated Taylor expansion. Barvinok showed that computing these coefficients can be reduced to computing the leading coefficients of the polynomial itself. However, to achieve the accuracy required for an approximation scheme, we require a number of leading coefficients that is logarithmic in the degree of the polynomial. For many combinatorial structures, directly computing these coefficients requires quasi-polynomial time. 

Patel and Regts~\cite{patel2017deterministic} showed that, for several classes of graph polynomials on bounded degree graphs, the leading coefficients can be computed in polynomial time. Their approach is based on expressing the coefficients as linear combinations of connected induced subgraph counts of size logarithmic in the size of the graph. It then follows from a result due to Borgs et al.~\cite{borgs2013left}, which states that, for bounded degree graphs, we can efficiently enumerate all connected induced subgraphs of logarithmic size.

Barvinok and Sober\'on~\cite{barvinok2017computing} established a deterministic quasi-polynomial time algorithm for approximating the multivariate graph homomorphism partition function on bounded degree graphs when the matrix entries are absolutely bounded close to one. In the case that all matrix entries are exactly equal to one, the partition function is easy to evaluate. Barvinok and Sober\'on proved that for bounded degree graphs when the matrix entries are absolutely bounded close to one, the partition function does not vanish. Finally, they proved that the leading coefficients can be computed in quasi-polynomial time. Patel and Regts~\cite{patel2017deterministic} improved this to a deterministic polynomial-time algorithm by showing that the coefficients can be expressed as linear combinations of connected induced subgraph counts. 

In order to establish a polynomial-time approximation scheme for the Ising model partition function, we provide an approximation-preserving polynomial-time reduction to a restricted version of the multivariate graph homomorphism partition function \mbox{(Proposition~\ref{proposition:IsingModelGraphHomomorphismPartitionFunctionReduction})}. We extend the results of Barvinok and Sober\'on~\cite{barvinok2017computing} and Patel and Regts~\cite{patel2017deterministic} to give an algorithm for approximating this restricted version of the multivariate graph homomorphism partition function on bounded degree graphs when the matrix entries are absolutely bounded close to one \mbox{(Theorem~\ref{theorem:GraphHomomorphismPartitionFunctionBoundedDegreeGraphs})}. As a consequence, we obtain a deterministic polynomial-time approximation scheme for the Ising model partition function on bounded degree graphs when the interactions and external fields are absolutely bounded sufficiently close to zero. Furthermore, we prove that in this case the Ising model partition function does not vanish \mbox{(Corollary~\ref{corollary:IsingModelPartitionFunctionZeroFreeRegion})}. This may be of independent interest in statistical physics as the possible points of physical phase transitions are exactly the real limit points of complex zeros~\cite{sokal2005multivariate}.

Previous work by Liu, Sinclair, and Srivastava~\cite{liu2019ising} studied the problem of approximating the ferromagnetic Ising model partition function based on the location of complex zeros. They gave a deterministic polynomial-time approximation scheme for the Ising model partition function in the ferromagnetic regime for all complex external fields that are not purely imaginary. This can be seen as an algorithmic consequence of the classic Lee-Yang Theorem~\cite{lee1952statistical}, which states that the ferromagnetic Ising model partition function does not vanish except when the external fields are purely imaginary. Peters and Regts~\cite{peters2018location} generalised this result by determining the exact location of zeros in the ferromagnetic and anti-ferromagnetic regime as a function of the inverse temperature and the maximum degree.

Further work has considered the problem of approximating the Ising model partition function on bounded degree graphs based on the decay of correlations property. Sinclair, Srivastava, and Thurley~\cite{sinclair2014approximation} established a deterministic polynomial-time approximation scheme for the anti-ferromagnetic Ising model partition function on graphs of maximum degree at most $\Delta$ when the interactions and external fields lie in the uniqueness region of the Gibbs measure on the infinite \mbox{$\Delta$-regular} tree, which is exactly the region that the decay of correlation property holds. Sly and Sun~\cite{sly2012computational} showed that for interactions outside of this region, unless \mbox{\textsc{RP}=\textsc{NP}}, there is no fully polynomial-time randomised approximation scheme for the anti-ferromagnetic Ising model partition function on graphs of maximum degree at most $\Delta\geq3$. Independent work by Galanis, \v{S}tefankovi\v{c}, and Vigoda~\cite{galanis2016inapproximability} established a similar result in the case of no external field. Liu, Sinclair, and Srivastava~\cite{liu2018fisher} showed that, in the case of no external field, the Ising model partition function has no zeros in a complex neighbourhood of the decay of correlation regime. This establishes a formal relationship between these two approaches.

Our final result is a polynomial-time algorithm for approximating certain output probability amplitudes of quantum circuits \mbox{(Corollary~\ref{corollary:IQPBoundedDegreeGraphs})}. This algorithm is based on the observation that complex-valued Ising model partition functions arise in the output probability amplitudes of quantum circuits~\cite{de2011quantum, iblisdir2014low}. We focus on a class of commuting quantum circuits, known as Instantaneous Quantum Polynomial-time (IQP) circuits~\cite{shepherd2009temporally}, where the mapping to the Ising model partition function is the most straightforward~\cite{shepherd2009temporally, shepherd2010binary, fujii2017commuting}. Bremner, Montanaro, and Shepherd~\cite{bremner2016average} showed that general IQP circuits cannot be efficiently classically simulated under the assumption that the Polynomial Hierarchy does not collapse and the Ising model partition function is \mbox{\textsc{\#P}-hard} on average over certain classes of interactions and external fields. Furthermore, IQP circuits are known to become universal for quantum computation under postselection~\cite{bremner2010classical}, therefore, approximating output probability amplitudes of IQP circuits is equivalent to approximating output probability amplitudes of universal quantum circuits. Our algorithm allows us to approximate a certain output probability amplitude of a quantum circuit when the corresponding graph has bounded degree and the interactions and external fields are absolutely bounded close to zero. Eldar and Mehraban~\cite{eldar2018approximating} used a similar approach to derive a quasi-polynomial time algorithm for approximating the permanent of a random matrix with unit variance and vanishing mean in the context of linear optical quantum computing.

This paper is structured as follows. In \mbox{Section~\ref{section:GraphHomomorphismPartitionFunctions}}, we introduce the multivariate graph homomorphism partition function and establish a deterministic polynomial-time algorithm for approximating a restricted version of this partition function on bounded degree graphs when the matrix entries are absolutely bounded close to one. In \mbox{Section~\ref{section:IsingModelPartitionFunctions}}, we provide an approximation-preserving polynomial-time reduction from the Ising model partition function to this restricted version of the multivariate graph homomorphism partition function. We then use this reduction to establish a deterministic polynomial-time approximation scheme for the Ising model partition function on bounded degree graphs when the interactions and external fields are absolutely bounded sufficiently close to zero. In this regime, we prove that the partition function does not vanish. In \mbox{Section~\ref{section:QuantumSimulation}}, we show how our algorithm can be extended to approximate certain output probability amplitudes of quantum circuits. Finally, we conclude in \mbox{Section~\ref{section:ConclusionAndOutlook}} with some remarks and open problems.

\section{Graph Homomorphism Partition Functions}
\label{section:GraphHomomorphismPartitionFunctions}

A \emph{graph homomorphism} between two graphs $G$ and $H$ is an \mbox{adjacency-preserving} map between the vertex sets, i.e., a map \mbox{$h:V(G) \to V(H)$} such that \mbox{$\{u, v\} \in E(G)$} implies \mbox{$\{h(u), h(v)\} \in E(H)$}. Graph homomorphisms generalise the notion of graph colouring~\cite{hell2004graphs}; for example, a graph homomorphism from a graph $G$ to the complete graph $K_q$ is equivalent to a proper \mbox{$q$-colouring} of $G$. 

Hell and Ne\v{s}et\v{r}il~\cite{hell1990complexity} proved that the problem of deciding if a homomorphism between two graphs $G$ and $H$ exists is \mbox{\textsc{NP}-complete}. Dyer and Greenhill~\cite{dyer2000complexity} showed that the corresponding counting problem is \mbox{\textsc{\#P}-hard}, unless the graph has some special structure; otherwise it is in \textsc{P}. Furthermore, they showed that this problem remains \mbox{\textsc{\#P}-hard} when restricted to graphs of bounded degree. The \emph{graph homomorphism partition function} is defined as follows.
\begin{definition}[Graph homomorphism partition function]
    Let \mbox{$G=(V,E)$} be a graph and let \mbox{$A=(a_{ij})_{m \times m}$} be a $m \times m$ symmetric matrix. Then the graph homomorphism partition function is defined by
    \begin{align}
        \mathrm{Hom}(G;A) := \sum_{\phi:V\to[m]}\prod_{\{u,v\}\in E}a_{\phi(u)\phi(v)}. \notag
    \end{align}
\end{definition}
The graph homomorphism partition function evaluates to many important combinatorial quantities, including counting the number of graph homomorphisms, proper colourings, and independent sets~\cite{barvinok2016combinatorics}.

The complexity of computing graph homomorphism partition functions has been widely studied. Dyer and Greenhill~\cite{dyer2000complexity} showed that computing \mbox{$\mathrm{Hom}(G;A)$} when $A$ is a fixed symmetric binary matrix is either in \textsc{P} or \mbox{\textsc{\#P}-hard}. Moreover, they showed that these hardness results hold even for graphs of maximum degree three. These results were later generalised to non-negative symmetric matrices~\cite{bulatov2005complexity}, real symmetric matrices~\cite{goldberg2010complexity}, and complex symmetric matrices~\cite{cai2010graph}. Furthermore, the tractability criterion for the matrix is decidable in polynomial time.

The graph homomorphism partition function can be generalised by assigning a $m \times m$ symmetric matrix to each edge. The \emph{multivariate graph homomorphism partition function} is defined as follows.
\begin{definition}[Multivariate graph homomorphism partition function]
    Let \mbox{$G=(V,E)$} be a graph with the $m \times m$ symmetric matrices \mbox{$\mathcal{A}=\{(a_{ij}^e)_{m \times m}\}_{e \in E}$} assigned to its edges. Then the multivariate graph homomorphism partition function is defined by
    \begin{align}
        \mathrm{Hom}_\mathrm{M}(G;\mathcal{A}) := \sum_{\phi:V\to[m]}\prod_{\{u,v\}\in E}a_{\phi(u)\phi(v)}^{\{u,v\}}. \notag
    \end{align}
\end{definition}
When the matrices are all equal, it is clear that the multivariate and standard graph homomorphism partition functions are equivalent.

For convenience, let us define the polydisc consisting of all sets of $m \times m$ symmetric matrices with matrix entries absolutely bounded close to one.
\begin{definition}[$\mathcal{D}_{G,m}(\delta)$]
    For a graph \mbox{$G=(V,E)$}, \mbox{$m\in\mathbb{Z}^+$}, and \mbox{$\delta>0$}, we define \mbox{$\mathcal{D}_{G,m}(\delta)$} to be the closed polydisc consisting of all sets of $m \times m$ symmetric matrices \mbox{$\mathcal{A}=\{(a_{ij}^e)_{m \times m}\}_{e \in E}$}, such that \mbox{$\abs{1-a_{ij}^e}\leq\delta$} for all \mbox{$e \in E$} and all \mbox{$i,j \in [m]$}.
\end{definition}

Barvinok and Sober\'on~\cite{barvinok2017computing} gave a \mbox{quasi-polynomial} time algorithm for approximating \mbox{$\mathrm{Hom}_\mathrm{M}(G;\mathcal{A})$} when $G$ is a graph of maximum degree at most $\Delta$ and $\mathcal{A}$ lies in the interior of the closed polydisc \mbox{$\mathcal{D}_{G,m}\left(\delta_\Delta\right)$}. Here, \mbox{$\delta_\Delta>0$} is an absolute constant. The absolute constants come from Barvinok's monograph~\cite{barvinok2016combinatorics}, where a simpler proof was presented with better constants. Patel and Regts~\cite{patel2017deterministic} improved this algorithm to run in polynomial time.
\begin{definition}[$\delta_\Delta$]
    For \mbox{$\Delta\in\mathbb{Z}^+$}, we define the absolute constant $\delta_\Delta$ by
    \begin{align}
        \delta_\Delta := \max_{0<\alpha<\frac{2\pi}{3\Delta}}\left[\sin\left(\frac{\alpha}{2}\right)\cos\left(\frac{\alpha\Delta}{2}\right)\right]. \notag
    \end{align}
\end{definition}
\begin{remark}
    A simple numerical search gives \mbox{$\delta_3=0.18$}, \mbox{$\delta_4=0.13$}, \mbox{$\delta_5=0.11$}, and \mbox{$\delta_6=0.09$}. In general, we have \mbox{$\delta_\Delta=\Omega(1/\Delta)$}.
\end{remark}

We shall consider a restricted version of the multivariate graph homomorphism partition function, in which the sum is restricted to map a subset of vertices to a fixed index.
\begin{definition}[Restricted multivariate graph homomorphism partition function]
    Let \mbox{$G=(V,E)$} be a graph with the $m \times m$ symmetric matrices \mbox{$\mathcal{A}=\{(a_{ij}^e)_{m \times m}\}_{e \in E}$} assigned to its edges. Further let \mbox{$S \subseteq V$} be a subset of $V$ and let \mbox{$k\in[m]$} be an integer. Then the restricted multivariate graph homomorphism partition function is defined by
    \begin{align}
        \mathrm{Hom}_\mathrm{M}(G,S,k;\mathcal{A}) := \sum_{\substack{\phi:V\to[m]\\\phi(s)=k,\forall s \in S}}\prod_{\{u,v\}\in E}a_{\phi(u)\phi(v)}^{\{u,v\}}. \notag
    \end{align}
\end{definition}
The advantage of considering the restricted multivariate graph homomorphism partition function is that, when reduced from the Ising partition function, it will allows us to implement an external magnetic field. This reduction is described in detail in \mbox{Appendix~\ref{section:IsingModelGraphHomomorphismPartitionFunctionReduction}}.

We extend the results of Barvinok and Sober\'on~\cite{barvinok2017computing} and Patel and Regts~\cite{patel2017deterministic} to give a deterministic polynomial-time approximation scheme for the restricted multivariate graph homomorphism partition function. We have the following theorem.  
\begin{theorem}[{restate=[name=restatement]GraphHomomorphismPartitionFunctionBoundedDegreeGraphs}]
    \label{theorem:GraphHomomorphismPartitionFunctionBoundedDegreeGraphs}
    Fix \mbox{$\Delta\in\mathbb{Z}^+$} and \mbox{$0<\delta<\delta_\Delta$}. There is a deterministic polynomial-time approximation scheme for the restricted multivariate graph homomorphism partition function \mbox{$\mathrm{Hom}_\mathrm{M}(G,S,k;\mathcal{A})$} for all graphs \mbox{$G=(V,E)$} of maximum degree at most $\Delta$ and all \mbox{$\mathcal{A}=\{(a_{ij}^e)_{m \times m}\}_{e \in E}$} in the closed polydisc \mbox{$\mathcal{D}_{G,m}\left(\delta\right)$}.
\end{theorem}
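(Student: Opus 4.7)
The plan is to apply Barvinok's polynomial interpolation method combined with the Patel-Regts machinery for efficiently computing leading coefficients on bounded degree graphs. For fixed $\mathcal{A} \in \mathcal{D}_{G,m}(\delta)$, define the univariate deformation $a_{ij}^e(z) := 1 + z(a_{ij}^e - 1)$, giving the polynomial
\begin{equation*}
    p(z) := \mathrm{Hom}_\mathrm{M}(G, S, k; \mathcal{A}(z))
\end{equation*}
of degree at most $|E|\leq\Delta|V|/2$. At $z=0$ every matrix entry equals $1$, so $p(0) = m^{|V\setminus S|}$ is trivial to evaluate, while $p(1)$ is the target. Since $\delta<\delta_\Delta$, I choose $r>1$ with $r\delta<\delta_\Delta$, so that for every $|z|\leq r$ the deformed matrices lie in $\mathcal{D}_{G,m}(r\delta)\subset\mathcal{D}_{G,m}(\delta_\Delta)$.

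The first step is zero-freeness: I would extend the Barvinok-Sober\'on argument to show $p(z)\neq 0$ for $|z|\leq r$. Their inductive proof controls the ratio $\mathrm{Hom}_\mathrm{M}(G;\mathcal{A})/\mathrm{Hom}_\mathrm{M}(G-v;\mathcal{A})$ vertex by vertex inside $\mathcal{D}_{G,m}(\delta_\Delta)$. The pinning $\phi|_S=k$ is absorbed into the induction by simply fixing the colour at vertices of $S$ when they are processed, which preserves both the recursion and the ratio bounds, so the same absolute constant $\delta_\Delta$ governs the restricted polydisc.

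With zero-freeness in hand, Barvinok's lemma implies that truncating the Taylor expansion of $\log p(z)$ about $z=0$ at degree $T=O(\log(|E|/\varepsilon))$ yields an additive $\varepsilon$-approximation of $\log p(1)$, and hence a multiplicative $(1\pm\varepsilon)$-approximation of $p(1)$. Via Newton's identities, these truncated log coefficients are determined by the first $T$ coefficients of $p(z)$ itself.

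It remains to compute these coefficients in polynomial time. Expanding the product in the definition of $p(z)$ writes $[z^k]p(z)$ as a sum over edge subsets $F\subseteq E$ of size $k$ of a weight that factorises over the connected components of $(V,F)$, which is precisely the structure exploited by Patel and Regts; the pinning at $S$ contributes only a local modification to components intersecting $S$. Consequently each coefficient reduces to a linear combination of weighted counts of connected induced subgraphs of $G$ of size at most $T$, and by the enumeration result of Borgs et al.~\cite{borgs2013left} all such subgraphs can be generated in time polynomial in $|V|$ for fixed $\Delta$ and $T=O(\log|V|)$. The main obstacle is verifying that the constraint $\phi|_S=k$ integrates cleanly into both the Barvinok-Sober\'on induction and the Patel-Regts connected-subgraph expansion without disrupting the locality-over-components structure on which both arguments rely.
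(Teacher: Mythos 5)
Your proposal is correct and follows essentially the same route as the paper: the deformation $\mathcal{A}(z)$ with zero-freeness on a disc of radius $\delta_\Delta/\delta>1$ (the paper obtains the restricted zero-free region directly from Barvinok's Theorem 7.1.4 rather than re-running the Barvinok--Sober\'on induction), Taylor truncation of $\log p$ at order $O(\log(\lvert V\rvert/\epsilon))$ via Barvinok's interpolation lemma, and a Patel--Regts connected-subgraph expansion of the coefficients extended to handle the pinning $\phi|_S=k$. The two points you flag as remaining obstacles are exactly what the paper's appendix lemmas supply, by extending the definitions to ``restricted graphs'' $(G,S)$ and restricted induced-subgraph counts.
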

We prove \mbox{Theorem~\ref{theorem:GraphHomomorphismPartitionFunctionBoundedDegreeGraphs}} in \mbox{Appendix~\ref{section:GraphHomomorphismPartitionFunctionBoundedDegreeGraphs}}. Our proof requires a result of Barvinok~\cite[Theorem 7.1.4]{barvinok2016combinatorics}, which states that \mbox{$\mathrm{Hom}_\mathrm{M}(G,S,k;\mathcal{A})$} does not vanish on graphs of maximum degree at most $\Delta$ when $\mathcal{A}$ lies in the interior of the closed polydisc \mbox{$\mathcal{D}_{G,m}\left(\delta_\Delta\right)$}.
\begin{lemma}[Barvinok~\cite{barvinok2016combinatorics}]
    \label{lemma:GraphHomomorphismPartitionFunctionZeroFreeRegion}
    Fix \mbox{$\Delta\in\mathbb{Z}^+$}. For any graph \mbox{$G=(V,E)$} of degree at most $\Delta$ and any \mbox{$\mathcal{A}=\{(a_{ij}^e)_{m \times m}\}_{e \in E}$} in the closed polydisc \mbox{$\mathcal{D}_{G,m}\left(\delta_\Delta\right)$}, the restricted multivariate graph homomorphism partition function does not vanish, i.e., \mbox{$\mathrm{Hom}_\mathrm{M}(G,S,k;\mathcal{A})\neq0$} for all \mbox{$S \subseteq V$} and all \mbox{$k\in[m]$}.
\end{lemma}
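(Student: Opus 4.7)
The plan is to derive Lemma~\ref{lemma:GraphHomomorphismPartitionFunctionZeroFreeRegion} by adapting Barvinok's proof of the corresponding non-vanishing result for the unrestricted multivariate graph homomorphism partition function~\cite[Theorem 7.1.4]{barvinok2016combinatorics}. The only new feature is the restriction $\phi(s) = k$ for $s \in S$, which is accommodated by strengthening the statement to allow an arbitrary boundary condition $k : S \to [m]$ rather than a single constant value, and then inducting on $\abs{V \setminus S}$.

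The base case $S = V$ is immediate: the sum collapses to the single term $\prod_{\{u,v\} \in E} a^{\{u,v\}}_{k(u)k(v)}$, a product of complex numbers each within distance $\delta_\Delta < 1$ of $1$, and hence non-zero. For the inductive step, pick $v \in V \setminus S$ and expand
\begin{align}
\mathrm{Hom}_\mathrm{M}(G, S, k; \mathcal{A}) = \sum_{j=1}^{m} \mathrm{Hom}_\mathrm{M}(G, S \cup \{v\}, k_j; \mathcal{A}), \notag
\end{align}
where $k_j$ extends $k$ by $k_j(v) = j$. Each summand is non-zero by the inductive hypothesis, but I need the sum itself to be non-zero. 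Following Barvinok, I would maintain the additional inductive invariant that the ratios $r_j := \mathrm{Hom}_\mathrm{M}(G, S \cup \{v\}, k_j; \mathcal{A})/\mathrm{Hom}_\mathrm{M}(G, S \cup \{v\}, k_1; \mathcal{A})$ all lie in a fixed sector $\Omega \subset \mathbb{C}$ with apex at the origin and opening angle strictly less than $2\pi/3$. Since such a sector does not contain $0$ in the convex cone it generates, $\sum_j r_j \ne 0$, which suffices.

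The main obstacle, and the origin of the precise quantity $\delta_\Delta = \max_{0 < \alpha < 2\pi/(3\Delta)} \sin(\alpha/2)\cos(\alpha\Delta/2)$, is propagating the sector invariant through the recursion. Each $r_j$ can be expressed as a product of at most $\deg(v) \le \Delta$ fractional linear expressions in ratios arising from vertex-deletion subproblems, and the polydisc condition $\abs{1 - a^e_{ij}} \le \delta_\Delta$ controls the angular shift imparted by each factor. The trigonometric extremum defining $\delta_\Delta$ is precisely tuned so that $\Delta$ such perturbations may compound while still keeping the aggregate argument within the allowed $2\pi/3$ wedge; verifying this compounding, and choosing $\Omega$ consistently across all the inductive invocations, is the technical heart of the argument.
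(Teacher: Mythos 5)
You should first note that the paper does not prove Lemma~\ref{lemma:GraphHomomorphismPartitionFunctionZeroFreeRegion} at all: it is imported from Barvinok's monograph~\cite[Theorem 7.1.4]{barvinok2016combinatorics}, whose inductive proof already establishes the non-vanishing of the conditional sums with prescribed boundary values, which is exactly the restricted statement needed here. Your sketch correctly reconstructs the architecture of that argument: strengthen to arbitrary boundary conditions $k:S\to[m]$, induct on the number of free vertices, and close the induction using the fact that non-zero complex numbers confined to a salient sector cannot sum to zero.

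As a standalone proof, however, the proposal has a genuine gap, and you have located it yourself: the entire content of the lemma lives in the step you defer, namely the verification that the angular invariant survives the recursion. Everything you do carry out is soft --- the base case and the sector-implies-nonvanishing observation are immediate and make no use of the hypothesis on $\mathcal{A}$ --- whereas the deferred step is precisely where $\abs{1-a^e_{ij}}\le\delta_\Delta$ and the specific form $\delta_\Delta=\max_{0<\alpha<2\pi/(3\Delta)}\sin(\alpha/2)\cos(\alpha\Delta/2)$ must be consumed. Concretely, one needs (i) a quantitative version of the sector lemma, $\abs{\sum_j u_j}\ge\cos(\theta/2)\sum_j\abs{u_j}$ for non-zero complex numbers with pairwise angles at most $\theta<2\pi/3$, which is where the factor $\cos(\alpha\Delta/2)$ enters, and (ii) an estimate showing that changing the value assigned to $v$ perturbs the conditional sum, edge by edge over the at most $\Delta$ edges at $v$, by a relative amount small enough (controlled by $\sin(\alpha/2)$ together with the bound from (i)) that the angle between $\mathrm{Hom}_\mathrm{M}(G,S\cup\{v\},k_i;\mathcal{A})$ and $\mathrm{Hom}_\mathrm{M}(G,S\cup\{v\},k_j;\mathcal{A})$ is at most $\alpha\Delta<2\pi/3$. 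Your sketch asserts that the ratios are products of fractional linear expressions whose angular shifts compound benignly, but neither states nor proves these two estimates; without them the inductive invariant is an assumption rather than a consequence, and the induction does not close. The proposal is therefore a faithful roadmap to Barvinok's proof rather than a proof.
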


Our proof also requires the following lemma, which states that we can efficiently compute the constant term and inverse power sums of the roots of \mbox{$\mathrm{Hom}_\mathrm{M}(G,S,k;\mathcal{A}(z))$}.
\begin{lemma}[{restate=[name=restatement]GraphHomomorphismPartitionFunctionComputeConstantTermAndInversePowerSums}]
    \label{lemma:GraphHomomorphismPartitionFunctionComputeConstantTermAndInversePowerSums}
    Fix \mbox{$\Delta\in\mathbb{Z}^+$}, \mbox{$0<\epsilon<1$}, and \mbox{$C>0$}. Let \mbox{$G=(V,E)$} be a graph of maximum degree at most $\Delta$ with the $m \times m$ symmetric matrices \mbox{$\mathcal{A}(z)=\{(1+z(a_{ij}^e-1))_{m \times m}\}_{e \in E}$} assigned to its edges. Further let \mbox{$\{r_i\}_{i=1}^{\abs{E}}$} be the roots of the polynomial \mbox{$P(G,S,k;z):=\mathrm{Hom}_\mathrm{M}(G,S,k;\mathcal{A}(z))$}. Then there is a deterministic \mbox{$(\abs{V}/\epsilon)^{O(1)}$-time} algorithm for computing $P(G,S,k,0)$ and the inverse power sums \mbox{$\left\{\sum_{i=1}^{\abs{E}}r_i^{-j}\right\}_{j=1}^m$} for \mbox{$m=C\log(\abs{V}/\epsilon)$}.
\end{lemma}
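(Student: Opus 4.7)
The plan is to reduce the computation of the inverse power sums $\sum_{i} r_i^{-j}$ to the computation of the first $m+1$ Taylor coefficients of $P(G,S,k;z)$, and to show that those coefficients can in turn be computed in polynomial time via a Patel--Regts-style reduction to connected induced subgraph counts on bounded-degree graphs.

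First, I will compute $P(G,S,k;0)$ directly: at $z=0$ every matrix entry equals $1$, so the sum collapses to the number of maps $\phi : V \to [m]$ with $\phi(s) = k$ for all $s \in S$, giving $P(G,S,k;0) = m^{\abs{V \setminus S}}$. In particular, $0$ is not a root of $P$, so the inverse power sums are well-defined. The inverse power sums of the roots of $P$ coincide with the ordinary power sums of the roots of the reciprocal polynomial $\tilde{P}(z) := z^{\abs{E}} P(G,S,k; 1/z)$, whose coefficient of $z^{\abs{E}-i}$ equals the coefficient of $z^i$ in $P$. Hence, once the first $m+1$ coefficients of $P$ are known, Newton's identities applied to $\tilde{P}$ yield the required inverse power sums in $O(m^2)$ arithmetic operations.

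To compute those coefficients, I would expand
\begin{align}
    P(G,S,k;z) = \sum_{F \subseteq E} z^{\abs{F}} \sum_{\substack{\phi:V\to[m]\\ \phi(s)=k,\,\forall s \in S}} \prod_{\{u,v\} \in F} \bigl(a_{\phi(u)\phi(v)}^{\{u,v\}} - 1\bigr). \notag
\end{align}
The inner sum factorises over the connected components of the spanning subgraph $(V(F), F)$: every vertex outside $V(F) \cup S$ contributes an independent factor of $m$, while every vertex in $S$ is frozen at $k$. Consequently, the coefficient of $z^t$ becomes a weighted sum, over $S$-annotated isomorphism types $H$ of subgraphs with at most $t$ edges and no isolated vertex, of the number of occurrences of $H$ as an edge-subset of $G$ respecting the $S$-annotation. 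Following Patel and Regts~\cite{patel2017deterministic}, each such occurrence count is expressible by a finite inclusion--exclusion as a linear combination of counts of \emph{connected induced} subgraphs of $G$ of size at most $t$, decorated by which of their vertices lie in $S$.

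By the theorem of Borgs et al.~\cite{borgs2013left}, for graphs of maximum degree $\Delta$ one can enumerate every connected induced subgraph on at most $t$ vertices in time $\abs{V} \cdot \Delta^{O(t)}$, and tagging each enumerated subgraph with its intersection with $S$ is a single extra pass. Setting $t = m = C\log(\abs{V}/\epsilon)$ gives total running time $(\abs{V}/\epsilon)^{O(1)}$, as required. The main obstacle I anticipate is bookkeeping rather than substantive: extending the Patel--Regts inclusion--exclusion to carry the binary vertex annotation coming from the boundary condition $\phi(S) = \{k\}$. This modification is local---each vertex is tagged with a single bit that only enters the subgraph weights by switching that vertex between a free sum over $[m]$ and a freeze at $k$---so the remaining arguments go through essentially verbatim.
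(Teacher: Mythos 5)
Your proposal is correct and follows essentially the same route as the paper: the subgraph expansion of $P(G,S,k;z)$ expressing the coefficients as linear combinations of $S$-annotated induced subgraph counts, Newton's identities on the reciprocal polynomial to recover the inverse power sums, and the Patel--Regts reduction to connected induced subgraph counts enumerated via Borgs et al.\ on bounded-degree graphs. The paper packages this by defining restricted graphs $(G,S)$ and showing that $m^{-\abs{V\setminus S}}P(G,S,k;z)$ is an edge-coloured bounded induced \emph{restricted} graph counting polynomial, which is exactly the ``carry a one-bit vertex annotation through the argument'' modification you identify.
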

We prove \mbox{Lemma~\ref{lemma:GraphHomomorphismPartitionFunctionComputeConstantTermAndInversePowerSums}} in \mbox{Appendix~\ref{section:GraphHomomorphismPartitionFunctionComputeConstantTermAndInversePowerSums}}. For convenience, let us define the closed disc $D$ of radius $\delta$ centred at the origin.
\begin{definition}[$D(\delta)$]
    For \mbox{$\delta>0$}, we define \mbox{$D(\delta)$} to be the closed disc consisting of all complex numbers $z$, such that \mbox{$\abs{z}\leq\delta$}.
\end{definition}

Finally, we require the following lemma, which arises from the error analysis of Barvinok's interpolation method~\cite{barvinok2015computing, barvinok2016computing} (see Barvinok~\cite[Lemma 2.2.1]{barvinok2016combinatorics}). The lemma states that, in order to get a multiplicative approximation to a polynomial inside its zero-free disc, it is sufficient to compute the constant term and inverse power sums of its roots.
\begin{lemma}[{name={Barvinok~\cite{barvinok2015computing, barvinok2016computing, barvinok2016combinatorics}}, restate=[name=restatement]{ApproximatePolynomialZeroFreeRegionInversePowerSums}}]
    \label{lemma:ApproximatePolynomialZeroFreeRegionInversePowerSums}
    Fix \mbox{$0<\epsilon<1$}. Let \mbox{$\{r_i\}_{i=1}^n$} be the roots of the polynomial \mbox{$p(z):=\sum_{k=0}^na_kz^k$}. Suppose that, for some \mbox{$\delta>0$}, the roots of $p$ lie in the exterior of the closed disc $D(\delta)$. Suppose further that we can compute $a_0$ and the inverse power sums \mbox{$\left\{\sum_{i=1}^nr_i^{-j}\right\}_{j=1}^m$} in time $\tau(m)$. Then, for any $t$ in the interior of the closed disc $D(\delta)$, we can compute a multiplicative \mbox{$\epsilon$-approximation} to $p(t)$ in time \mbox{$O\left[\tau\left(\frac{\log(n/\epsilon)}{1-\abs{t}/\delta}\right)\right]$}.
\end{lemma}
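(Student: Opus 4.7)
The plan is to apply Barvinok's interpolation method, which approximates $p(t)$ by truncating the Taylor expansion of $\log p(z)$ about the origin. Since the roots $\{r_i\}$ lie in the exterior of $D(\delta)$ and $t$ lies in the interior, the disc $D(\delta)$ is simply connected and avoids all roots of $p$, so $\log p(z)$ admits a well-defined analytic branch on $D(\delta)$ once we fix $\log a_0$ (possible since $a_0 = p(0) \neq 0$ by hypothesis on the roots). Writing $p(z)=a_0\prod_{i=1}^n(1-z/r_i)$ and expanding each logarithm gives
\begin{align}
    \log p(z) = \log a_0 - \sum_{j=1}^\infty \frac{z^j}{j}\sum_{i=1}^n r_i^{-j}, \notag
\end{align}
which converges absolutely for $|z|<\delta$. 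The approximation will be the exponential of the degree-$m$ truncation $T_m(z)$ of this series.

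First I would bound the truncation error. For $|z|\leq |t|<\delta$, each inverse power sum satisfies $|\sum_i r_i^{-j}|\leq n\delta^{-j}$, so the tail is bounded by
\begin{align}
    |\log p(t) - T_m(t)| \leq n \sum_{j=m+1}^\infty \frac{(|t|/\delta)^j}{j} \leq \frac{n\,(|t|/\delta)^{m+1}}{(m+1)\,(1-|t|/\delta)}. \notag
\end{align}
Choosing $m = \bigl\lceil C\log(n/\epsilon)/(1-|t|/\delta)\bigr\rceil$ for a suitable absolute constant $C$ makes this bound at most $\epsilon/2$ (one uses the elementary estimate $-\log(|t|/\delta)\geq 1-|t|/\delta$ to convert the required $m\cdot\log(\delta/|t|)\geq\log(2n/\epsilon(1-|t|/\delta))$ into the stated form).

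Next I would convert the additive error on the logarithm into a multiplicative error on $p(t)$. Using $|e^w-1|\leq 2|w|$ for $|w|\leq 1/2$,
\begin{align}
    \left|\frac{\exp(T_m(t))}{p(t)} - 1\right| = \left|e^{T_m(t)-\log p(t)} - 1\right| \leq 2\,|T_m(t)-\log p(t)| \leq \epsilon, \notag
\end{align}
so $\exp(T_m(t))$ is a multiplicative $\epsilon$-approximation to $p(t)$. Computing $T_m(t)$ requires only $a_0$ and the first $m$ inverse power sums, which by hypothesis takes time $\tau(m)$; the remaining arithmetic (dividing by $j$, multiplying by powers of $t$, summing, and exponentiating) is polynomial in $m$ and absorbed into the $O(\cdot)$, giving the claimed running time $O\bigl[\tau\bigl(\log(n/\epsilon)/(1-|t|/\delta)\bigr)\bigr]$.

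The main obstacle is a bookkeeping one rather than a conceptual one: pinning down the constant $C$ and the threshold for $|w|\leq 1/2$ so that the additive-to-multiplicative conversion is valid throughout the prescribed regime, and ensuring that the choice of branch of $\log a_0$ (which may a priori carry a $2\pi i k$ ambiguity) is consistent with the analytic continuation along $[0,t]$ so that $\exp(T_m(t))$ indeed approximates $p(t)$ rather than a different sheet. Both can be handled by performing all computations within the principal branch on the simply connected region $D(\delta)\setminus\{r_i\}$ and choosing $\log a_0$ once at $z=0$.
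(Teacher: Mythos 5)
Your proposal is correct and follows essentially the same route as the paper: expand $\log p(z)$ about the origin via the roots, bound the tail of the Taylor series using $|s_j|\leq n\delta^{-j}$ and the geometric series, truncate at $m=O\bigl(\log(n/\epsilon)/(1-|t|/\delta)\bigr)$, and exponentiate. The only cosmetic difference is in the final additive-to-multiplicative conversion, where you use $|e^w-1|\leq 2|w|$ while the paper bounds the modulus ratio and the argument of $e^{T_m-f}$ separately; both are standard and equivalent up to constants.
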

We prove \mbox{Lemma~\ref{lemma:ApproximatePolynomialZeroFreeRegionInversePowerSums}} in \mbox{Appendix~\ref{section:ApproximatePolynomialZeroFreeRegionInversePowerSums}}.

\section{Ising Model Partition Functions}
\label{section:IsingModelPartitionFunctions}

The Ising model is described by a graph \mbox{$G=(V,E)$}, with the vertices representing spins and the edges representing interactions between them. A set of edge weights \mbox{$\{\omega_e\}_{e \in E}$} characterise the interactions and a set of vertex weights \mbox{$\{\upsilon_v\}_{v \in V}$} characterise the external fields at each spin. A configuration of the model is an assignment $\sigma$ of each spin to one of two possible states \mbox{$\{-1,+1\}$}. The \emph{Ising model partition function} is defined as follows.
\begin{definition}[Ising model partition function]
    Let \mbox{$G=(V,E)$} be a graph with the weights \mbox{$\Omega=\{\omega_e\}_{e \in E}$} assigned to its edges and the weights \mbox{$\Upsilon=\{\upsilon_v\}_{v \in V}$} assigned to its vertices. Then the Ising model partition function is defined by
    \begin{align}
        \mathrm{Z}_\mathrm{Ising}(G;\Omega,\Upsilon) := \sum_{\sigma\in\{-1,+1\}^V}w_G(\sigma), \notag
    \end{align}
    where
    \begin{align}
        w_G(\sigma) = \exp\left(\sum_{\{u,v\}\in E}\omega_{\{u,v\}}\sigma_u\sigma_v+\sum_{v\in V}\upsilon_v\sigma_v\right). \notag
    \end{align}
\end{definition}
The model is called \emph{ferromagnetic} if \mbox{$\omega_e>0$} for all \mbox{$e \in E$} and \emph{anti-ferromagnetic} if \mbox{$\omega_e<0$} for all \mbox{$e \in E$}. Otherwise, the model is called \emph{non-ferromagnetic}.

A classic result of Jerrum and Sinclair~\cite{jerrum1993polynomial} establishes a fully polynomial-time randomised approximation scheme for the Ising model partition function for all graphs in the ferromagnetic regime with real vertex weights. In contrast, they showed that no such scheme could exists in the anti-ferromagnetic regime unless \mbox{\textsc{RP}=\textsc{NP}}. Furthermore, they showed that exactly computing the Ising model partition function is \mbox{\textsc{\#P}-hard}.

We shall extend the result of \mbox{Theorem~\ref{theorem:GraphHomomorphismPartitionFunctionBoundedDegreeGraphs}} to the Ising model partition function. This is achieved by an approximation-preserving polynomial-time reduction from the Ising model partition function to the restricted multivariate graph homomorphism partition function.
\begin{proposition}[{restate=[name=restatement]IsingModelGraphHomomorphismPartitionFunctionReduction}]
    \label{proposition:IsingModelGraphHomomorphismPartitionFunctionReduction}
    There is an approximation-preserving polynomial-time reduction from the Ising model partition function to the restricted multivariate graph homomorphism partition function.
\end{proposition}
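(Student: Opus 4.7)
The plan is a direct gadget-based encoding. Given an Ising instance $(G,\Omega,\Upsilon)$ with $G=(V,E)$, I would construct $G'=(V',E')$ by attaching one pendant auxiliary vertex $v'$ to each $v\in V$, i.e., $V'=V\cup\{v':v\in V\}$ and $E'=E\cup\{\{v,v'\}:v\in V\}$. Take $m=2$ with the identification $s_1=-1$, $s_2=+1$, set $S=\{v':v\in V\}$ and $k=1$, so that every auxiliary vertex is pinned to index $1$ in the restricted sum.

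To each original edge $e=\{u,v\}\in E$ assign the $2\times 2$ symmetric matrix $A^e$ with entries $A^e_{ij}=\exp(\omega_e s_i s_j)$; this contributes the interaction factor $\exp(\omega_e \sigma_u\sigma_v)$ under the identification $\sigma_v=s_{\phi(v)}$. To each auxiliary edge $\{v,v'\}$ assign the symmetric matrix $B^{\{v,v'\}}$ with $B_{11}=\exp(-\upsilon_v)$, $B_{12}=B_{21}=\exp(\upsilon_v)$, and $B_{22}$ chosen freely (the pin $\phi(v')=1$ renders the second column irrelevant). This auxiliary edge then contributes exactly $\exp(\upsilon_v \sigma_v)$.

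Expanding the restricted homomorphism sum, the constraint fixes $\phi(v')=1$ for each auxiliary vertex, so the outer sum reduces to one over maps $\phi:V\to[2]$, which we identify with spin configurations $\sigma\in\{-1,+1\}^V$. The product of edge weights splits as $\prod_{\{u,v\}\in E}\exp(\omega_{\{u,v\}}\sigma_u\sigma_v)\cdot\prod_{v\in V}\exp(\upsilon_v\sigma_v)$, which is exactly $w_G(\sigma)$. Summing over $\sigma$ yields $\mathrm{Hom}_\mathrm{M}(G',S,1;\mathcal{A})=\mathrm{Z}_\mathrm{Ising}(G;\Omega,\Upsilon)$. The construction is clearly polynomial time ($|V'|=2|V|$, $|E'|=|E|+|V|$), and since the two quantities are identical, any multiplicative $\epsilon$-approximation for one transfers verbatim to the other, so the reduction is approximation preserving.

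There is no substantive obstacle in the proof itself; the subtlety is purely bookkeeping for the downstream application. The maximum degree increases by at most $1$ (auxiliary vertices have degree $1$), and when $|\omega_e|,|\upsilon_v|$ are absolutely bounded close to zero, the matrix entries $\exp(\pm\omega_e),\exp(\pm\upsilon_v)$ can be made arbitrarily close to $1$ by the standard estimate on the complex exponential, placing $\mathcal{A}$ inside the polydisc $\mathcal{D}_{G',2}(\delta)$ required by \mbox{Theorem~\ref{theorem:GraphHomomorphismPartitionFunctionBoundedDegreeGraphs}}. This is what enables \mbox{Proposition~\ref{proposition:IsingModelGraphHomomorphismPartitionFunctionReduction}} to be combined with the theorem to yield \mbox{Corollary~\ref{corollary:IsingModelPartitionFunctionBoundedDegreeGraphs}}.
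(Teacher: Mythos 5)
Your construction is essentially identical to the paper's: the same pendant-vertex gadget with $S$ the set of auxiliary vertices, the same edge matrices $\exp(\omega_e s_i s_j)$, and an auxiliary-edge matrix whose pinned column implements $\exp(\upsilon_v\sigma_v)$ (the paper pins to index $2$ rather than $1$ and writes the matrices uniformly as $\exp[\upsilon_v(2i-3)(2j-3)]$, which also fixes the "free" entry so that the whole matrix stays in the required polydisc — a detail you should make explicit rather than leave arbitrary). The proof is correct and the remarks on degree increase and the polydisc condition match the paper's use of the reduction in Corollary~\ref{corollary:IsingModelPartitionFunctionBoundedDegreeGraphs}.
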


We prove \mbox{Proposition~\ref{proposition:IsingModelGraphHomomorphismPartitionFunctionReduction}} in \mbox{Appendix~\ref{section:IsingModelGraphHomomorphismPartitionFunctionReduction}}. Let us define the following closed polyregion, which arises naturally from applying \mbox{Proposition~\ref{proposition:IsingModelGraphHomomorphismPartitionFunctionReduction}} to \mbox{Theorem~\ref{theorem:GraphHomomorphismPartitionFunctionBoundedDegreeGraphs}}. 
\begin{definition}[$\mathcal{R}_G(\delta)$]
    For a graph \mbox{$G=(V,E)$} and \mbox{$\delta>0$}, we define \mbox{$\mathcal{R}_G(\delta)$} to be the closed polyregion consisting of all sets of weights \mbox{$\Omega=\{\omega_e\}_{e \in E}$} and \mbox{$\Upsilon=\{\upsilon_v\}_{v \in V}$}, such that \mbox{$\abs{1-e^{\pm\omega_e}}\leq\delta$} for all \mbox{$e \in E$} and \mbox{$\abs{1-e^{\pm\upsilon_v}}\leq\delta$} for all \mbox{$v \in V$}.
\end{definition}

We have the following corollary of \mbox{Theorem~\ref{theorem:GraphHomomorphismPartitionFunctionBoundedDegreeGraphs}} and \mbox{Proposition~\ref{proposition:IsingModelGraphHomomorphismPartitionFunctionReduction}}.
\begin{corollary}
    \label{corollary:IsingModelPartitionFunctionBoundedDegreeGraphs}
    Fix \mbox{$\Delta\in\mathbb{Z}^+$} and \mbox{$0<\delta<\delta_{\Delta+1}$}. There is a deterministic polynomial-time approximation scheme for the Ising model partition function \mbox{$\mathrm{Z}_\mathrm{Ising}(G;\Omega,\Upsilon)$} for all graphs \mbox{$G=(V,E)$} of maximum degree at most $\Delta$ and all \mbox{$\Omega=\{\omega_e\}_{e \in E}$} and all \mbox{$\Upsilon=\{\upsilon_v\}_{v \in V}$} in the closed polyregion \mbox{$\mathcal{R}_G\left(\delta\right)$}.
\end{corollary}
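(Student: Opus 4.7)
The plan is to compose Proposition~\ref{proposition:IsingModelGraphHomomorphismPartitionFunctionReduction} with Theorem~\ref{theorem:GraphHomomorphismPartitionFunctionBoundedDegreeGraphs}. Given an input $(G,\Omega,\Upsilon)$ to the Ising problem with $G$ of maximum degree at most $\Delta$ and $(\Omega,\Upsilon)\in\mathcal{R}_G(\delta)$, I would first invoke the approximation-preserving polynomial-time reduction of Proposition~\ref{proposition:IsingModelGraphHomomorphismPartitionFunctionReduction} to produce, in polynomial time, an instance $(G',S,k,\mathcal{A})$ of the restricted multivariate graph homomorphism partition function from which $\mathrm{Z}_\mathrm{Ising}(G;\Omega,\Upsilon)$ can be recovered in an approximation-preserving way.

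The key quantitative check is then that this reduced instance satisfies the hypotheses of Theorem~\ref{theorem:GraphHomomorphismPartitionFunctionBoundedDegreeGraphs} with degree parameter $\Delta+1$ and polydisc radius $\delta$. On the degree side, the reduction must somehow encode the external-field weights $\Upsilon$; the natural way is to attach to each vertex $v\in V$ a single auxiliary gadget carrying the weight $\upsilon_v$, which raises the maximum degree by at most one, so that $G'$ has maximum degree at most $\Delta+1$. On the polydisc side, the definition of $\mathcal{R}_G(\delta)$, which bounds both $|1-e^{\pm\omega_e}|$ and $|1-e^{\pm\upsilon_v}|$ by $\delta$, is tailored so that the edge and vertex weights translate into matrix entries $a_{ij}^e$ with $|1-a_{ij}^e|\leq\delta$, i.e., so that $\mathcal{A}\in\mathcal{D}_{G',m}(\delta)$.

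Since $\delta<\delta_{\Delta+1}$ and $G'$ has maximum degree at most $\Delta+1$, Theorem~\ref{theorem:GraphHomomorphismPartitionFunctionBoundedDegreeGraphs} then supplies a deterministic polynomial-time multiplicative $\epsilon$-approximation to $\mathrm{Hom}_\mathrm{M}(G',S,k;\mathcal{A})$. Because the reduction is approximation-preserving and the size of $G'$ is polynomially related to that of $G$, this converts directly into a deterministic polynomial-time multiplicative $\epsilon$-approximation to $\mathrm{Z}_\mathrm{Ising}(G;\Omega,\Upsilon)$, which is exactly the corollary. Granting Proposition~\ref{proposition:IsingModelGraphHomomorphismPartitionFunctionReduction}, the main place where care is needed is to confirm that the gadget used to implement the external field adds \emph{exactly} one to the degree bound and no more; any slack here would force the statement to use $\delta_{\Delta+c}$ for some $c>1$, thereby shrinking the usable polyregion. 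Propagating approximation error through the final multiplicative rescaling is otherwise routine.
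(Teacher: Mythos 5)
Your proposal is correct and matches the paper's proof: both compose Proposition~\ref{proposition:IsingModelGraphHomomorphismPartitionFunctionReduction} with Theorem~\ref{theorem:GraphHomomorphismPartitionFunctionBoundedDegreeGraphs}, observing that the vertex gadget (one pendant vertex $s_v$ per $v\in V$) raises the maximum degree by exactly one and that the definition of $\mathcal{R}_G(\delta)$ translates directly into the matrix entries lying in $\mathcal{D}_{G',2}(\delta)$. Your quantitative checks on the degree increase and the polydisc condition are exactly the points the paper's one-line proof implicitly relies on.
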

\begin{proof}
    The proof follows directly from \mbox{Theorem~\ref{theorem:GraphHomomorphismPartitionFunctionBoundedDegreeGraphs}} and \mbox{Proposition~\ref{proposition:IsingModelGraphHomomorphismPartitionFunctionReduction}}, while noting that the reduction from the Ising model partition to the restricted multivariate graph homomorphism partition function increases the maximum vertex degree by one.
\end{proof}

\begin{remark}
    It is possible to marginally increase the size of the polyregion by applying the \emph{\mbox{$k$-thickening}} technique of Jaeger, Vertigan, and Welsh~\cite{jaeger1990computational}.
\end{remark}

It is important to mention that the bounds of \mbox{Corollary~\ref{corollary:IsingModelPartitionFunctionBoundedDegreeGraphs}} are not sharp in general. To see this, let us compare the results in the anti-ferromagnetic regime with no external field, to those of Sinclair, Srivastava, and Thurley~\cite{sinclair2014approximation}. In this case, \mbox{Corollary~\ref{corollary:IsingModelPartitionFunctionBoundedDegreeGraphs}} tells us that there is a deterministic polynomial-time approximation scheme for the Ising model partition function on graphs of maximum degree at most $\Delta$ when \mbox{$\omega_e>-\log(\delta_\Delta+1)$} for all \mbox{$e \in E$} (noting that in the case of no external field the reduction preserves maximum degree). The results of Sinclair, Srivastava, and Thurley~\cite{sinclair2014approximation} give a deterministic polynomial-time approximation scheme when \mbox{$\Delta\geq3$} and \mbox{$\omega_e>-\frac{1}{2}\log\left(\frac{\Delta}{\Delta-2}\right)$} for all \mbox{$e \in E$}. Hence, the bound of \mbox{Corollary~\ref{corollary:IsingModelPartitionFunctionBoundedDegreeGraphs}} is not sharp. It is an open problem to prove a sharp bound in the complex case. 

We also have the following corollary concerning the location of the complex zeros of the Ising model partition function on bounded degree graphs. 
\begin{corollary}
    \label{corollary:IsingModelPartitionFunctionZeroFreeRegion}
    Fix \mbox{$\Delta\in\mathbb{Z}^+$}. For any graph \mbox{$G=(V,E)$} of degree at most $\Delta$ and any \mbox{$\Omega=\{\omega_e\}_{e \in E}$} and \mbox{$\Upsilon=\{\upsilon_v\}_{v \in V}$} in the closed polyregion \mbox{$\mathcal{R}_G\left(\delta_{\Delta+1}\right)$}, the Ising model partition function does not vanish, i.e., \mbox{$\mathrm{Z}_\mathrm{Ising}(G;\Omega,\Upsilon)\neq0$}.
\end{corollary}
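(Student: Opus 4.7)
The plan is to derive the corollary by combining the reduction from Proposition~\ref{proposition:IsingModelGraphHomomorphismPartitionFunctionReduction} with the zero-freeness statement for the restricted multivariate graph homomorphism partition function in Lemma~\ref{lemma:GraphHomomorphismPartitionFunctionZeroFreeRegion}. The structure mirrors the proof of Corollary~\ref{corollary:IsingModelPartitionFunctionBoundedDegreeGraphs}, but here only non-vanishing matters, so I do not need to invoke Theorem~\ref{theorem:GraphHomomorphismPartitionFunctionBoundedDegreeGraphs}; I can work directly at the boundary $\delta_{\Delta+1}$ rather than strictly inside it.

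First, I would feed $G$, $\Omega$, $\Upsilon$ into the reduction of Proposition~\ref{proposition:IsingModelGraphHomomorphismPartitionFunctionReduction}. This produces a graph $G'$, a subset $S \subseteq V(G')$, an index $k \in [2]$, and a collection $\mathcal{A}$ of $2 \times 2$ symmetric matrices, together with a nonzero scalar $c$ satisfying $\mathrm{Z}_\mathrm{Ising}(G;\Omega,\Upsilon) = c \cdot \mathrm{Hom}_\mathrm{M}(G',S,k;\mathcal{A})$. Exactly as in the proof of Corollary~\ref{corollary:IsingModelPartitionFunctionBoundedDegreeGraphs}, the reduction increases maximum degree by at most one, so $G'$ has maximum degree at most $\Delta+1$.

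Next, I would transport the hypothesis $(\Omega,\Upsilon) \in \mathcal{R}_G(\delta_{\Delta+1})$ to the hypothesis $\mathcal{A} \in \mathcal{D}_{G',2}(\delta_{\Delta+1})$. By the definition of $\mathcal{R}_G$, we have $\abs{1 - e^{\pm\omega_e}} \leq \delta_{\Delta+1}$ and $\abs{1 - e^{\pm\upsilon_v}} \leq \delta_{\Delta+1}$. Inspecting the reduction shows that every entry of every matrix in $\mathcal{A}$ is either $1$, some $e^{\pm\omega_e}$, or some $e^{\pm\upsilon_v}$, so each entry lies within distance $\delta_{\Delta+1}$ of $1$. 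Lemma~\ref{lemma:GraphHomomorphismPartitionFunctionZeroFreeRegion} applied to $G'$ then gives $\mathrm{Hom}_\mathrm{M}(G',S,k;\mathcal{A}) \neq 0$, and multiplication by the nonzero scalar $c$ finishes the argument.

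The main (mild) obstacle is the bookkeeping in the second step: verifying that the reduction delivered by Proposition~\ref{proposition:IsingModelGraphHomomorphismPartitionFunctionReduction} really does take the form of a nonzero scalar times a restricted multivariate homomorphism partition function whose matrix entries are precisely $1$, $e^{\pm\omega_e}$, and $e^{\pm\upsilon_v}$. Once that explicit form is confirmed — which is exactly why the polyregion $\mathcal{R}_G$ was defined in terms of $\abs{1 - e^{\pm\omega_e}}$ and $\abs{1 - e^{\pm\upsilon_v}}$ — the corollary follows immediately from the lemma, without any further analytic work.
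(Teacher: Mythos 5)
Your proposal is correct and matches the paper's (one-line) proof, which likewise just combines Proposition~\ref{proposition:IsingModelGraphHomomorphismPartitionFunctionReduction} with Lemma~\ref{lemma:GraphHomomorphismPartitionFunctionZeroFreeRegion}, using that the pendant-vertex gadget raises the maximum degree from $\Delta$ to $\Delta+1$ and produces matrix entries of the form $e^{\pm\omega_e}$ and $e^{\pm\upsilon_v}$, which is exactly what $\mathcal{R}_G(\delta_{\Delta+1})$ controls. The only cosmetic difference is that the reduction actually gives exact equality (your scalar $c$ is $1$), so no nonvanishing of a prefactor needs to be checked.
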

\begin{proof}
    The proof follows directly from \mbox{Lemma~\ref{lemma:GraphHomomorphismPartitionFunctionZeroFreeRegion}} and \mbox{Proposition~\ref{proposition:IsingModelGraphHomomorphismPartitionFunctionReduction}}.
\end{proof}

This may be of independent interest in statistical physics as the possible points of physical phase transitions are exactly the real limit points of such complex zeros~\cite{sokal2005multivariate}.

\section{Quantum Simulation}
\label{section:QuantumSimulation}

Complex-valued Ising model partition functions arise naturally in the output probability amplitudes of quantum circuits~\cite{de2011quantum, iblisdir2014low}. In particular, for the class of commuting quantum circuits, known as \emph{Instantaneous Quantum Polynomial-time} (IQP) circuits~\cite{shepherd2009temporally, shepherd2010binary, fujii2017commuting}. In this section we shall show how the results of \mbox{Corollary~\ref{corollary:IsingModelPartitionFunctionBoundedDegreeGraphs}} allow us to approximate output probability amplitudes of IQP circuits and, more generally, universal quantum circuits. First introduced by Shepherd and Bremner~\cite{shepherd2009temporally}, IQP circuits comprise only gates that are diagonal in the \mbox{Pauli-X} basis. An IQP circuit is described by an \emph{X-program}.
\begin{definition}[X-program]
    An X-program is a pair \mbox{$(P,\theta)$}, where \mbox{$P=(p_{ij})_{m \times n}$} is a binary matrix and \mbox{$\theta\in[-\pi,\pi]$} is a real angle. The matrix $P$ is used to construct a Hamiltonian of $m$ commuting terms acting on $n$ qubits, where each term in the Hamiltonian is a product of Pauli-X operators, 
    \begin{align}
        \mathrm{H}_{(P,\theta)} := -\theta\sum_{i=1}^m\bigotimes_{j=1}^nX_j^{p_{ij}}. \notag
    \end{align}
    Thus, the columns of $P$ correspond to qubits and the rows of $P$ correspond to interactions in the Hamiltonian.
\end{definition}
An X-program induces a probability distribution $\mathcal{P}_{(P,\theta)}$ known as an \emph{IQP distribution}.
\begin{definition}[$\mathcal{P}_{(P,\theta)}$]
    For an X-program \mbox{$(P,\theta)$} with \mbox{$P=(p_{ij})_{m \times n}$}, we define $\mathcal{P}_{(P,\theta)}$ to be the probability distribution over binary strings \mbox{$x\in\{0,1\}^n$}, given by
    \begin{align}
        \textbf{Pr}[x] := \abs{\bra{x}\exp\left(-i\mathrm{H}_{(P,\theta)}\right)\ket{0^n}}^2. \notag
    \end{align}
\end{definition}

We shall consider X-programs that are induced by a weighted graph.
\begin{definition}[Graph-induced X-program]
    For a graph \mbox{$G=(V,E)$} with the weights \mbox{$\left\{\omega_e\in[-\pi,\pi]\right\}_{e \in E}$} assigned to its edges and the weights \mbox{$\left\{\upsilon_v\in[-\pi,\pi]\right\}_{v \in V}$} assigned to its vertices, we define the X-program induced by $G$ to be an X-program $\mathcal{X}_G$ such that
    \begin{align}
        \mathrm{H}_{\mathcal{X}_G} = -\sum_{\{u,v\} \in E}\omega_{\{u,v\}}X_uX_v-\sum_{v \in V}\upsilon_vX_v. \notag
    \end{align}
\end{definition}

It will be convenient for us to define $\psi_{G}$ as a specific probability amplitude induced by a weighted graph $G$.
\begin{definition}[$\psi_{G}$]
    For a graph \mbox{$G=(V,E)$} with the weights \mbox{$\left\{\omega_e\in[-\pi,\pi]\right\}_{e \in E}$} assigned to its edges and the weights \mbox{$\left\{\upsilon_v\in[-\pi,\pi]\right\}_{v \in V}$} assigned to its vertices, we define $\psi_{G}$ to be the probability amplitude given by
    \begin{align}
        \psi_{G} := \bra{0^{\abs{V}}}\exp\left(-i\mathrm{H}_{\mathcal{X}_G}\right)\ket{0^{\abs{V}}}. \notag
    \end{align}
\end{definition}
We note that any X-program can be efficiently represented by a graph-induced X-program~\cite{shepherd2009temporally}. Moreover, X-programs are known to become universal for quantum computation under postselection~\cite{bremner2010classical}. Therefore, any quantum amplitude can be expressed in the form of $\psi_{G}$. The output probability amplitudes of such a graph-induced X-program are proportional to Ising model partition functions with imaginary weights.
\begin{proposition}[{restate=[name=restatement]IQPIsingModelPartitionFunctionRelation}]
    \label{proposition:IQPIsingModelPartitionFunctionRelation}
    Let \mbox{$G=(V,E)$} be a graph with the weights \mbox{$\Omega=\left\{\omega_e\in[-\pi,\pi]\right\}_{e \in E}$} assigned to its edges and the weights \mbox{$\Upsilon=\left\{\upsilon_v\in[-\pi,\pi]\right\}_{v \in V}$} assigned to its vertices, then,
    \begin{align}
        \psi_{G} = \frac{1}{2^{\abs{V}}}\mathrm{Z}_\mathrm{Ising}(G;i\Omega,i\Upsilon). \notag
    \end{align}
\end{proposition}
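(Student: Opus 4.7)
The plan is to evaluate the amplitude by diagonalising $\exp(-i\mathrm{H}_{\mathcal{X}_G})$ in the Pauli-X eigenbasis, since all the terms in $\mathrm{H}_{\mathcal{X}_G}$ are tensor products of $X$ operators and therefore mutually commute and share the same $2^{|V|}$ product eigenbasis indexed by spin configurations $\sigma\in\{-1,+1\}^V$. In this basis, the operator $X_uX_v$ acts as multiplication by $\sigma_u\sigma_v$ and $X_v$ acts as multiplication by $\sigma_v$, so the eigenvalue of $\exp(-i\mathrm{H}_{\mathcal{X}_G})$ on $|\sigma\rangle_X := \bigotimes_v|\sigma_v\rangle_X$ is exactly
\begin{equation}
\exp\!\left(i\sum_{\{u,v\}\in E}\omega_{\{u,v\}}\sigma_u\sigma_v+i\sum_{v\in V}\upsilon_v\sigma_v\right),\notag
\end{equation}
which is precisely the Boltzmann weight $w_G(\sigma)$ appearing in $\mathrm{Z}_\mathrm{Ising}(G;i\Omega,i\Upsilon)$.

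Next, I would expand the input and output computational basis states in the X-basis. Using $|0\rangle=\tfrac{1}{\sqrt{2}}(|+\rangle+|-\rangle)$, we have
\begin{equation}
|0^{|V|}\rangle=\frac{1}{2^{|V|/2}}\sum_{\sigma\in\{-1,+1\}^V}|\sigma\rangle_X,\notag
\end{equation}
and similarly for the bra. Inserting these two expansions into the definition of $\psi_G$ and using orthonormality of the X-basis to collapse the double sum into a single sum over $\sigma$ gives
\begin{equation}
\psi_G=\frac{1}{2^{|V|}}\sum_{\sigma\in\{-1,+1\}^V}\exp\!\left(i\!\!\sum_{\{u,v\}\in E}\!\!\omega_{\{u,v\}}\sigma_u\sigma_v+i\sum_{v\in V}\upsilon_v\sigma_v\right),\notag
\end{equation}
which is exactly $2^{-|V|}\,\mathrm{Z}_\mathrm{Ising}(G;i\Omega,i\Upsilon)$ by the definition of the Ising partition function with the substitutions $\omega_e\mapsto i\omega_e$ and $\upsilon_v\mapsto i\upsilon_v$.

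There is no real obstacle here; the only thing to be careful about is bookkeeping of the sign and the factor of $i$ in the exponent, since $\mathrm{H}_{\mathcal{X}_G}$ already carries an overall minus sign in front of the sums, so the $-i\mathrm{H}_{\mathcal{X}_G}$ in the definition of $\psi_G$ produces a $+i$ in the exponent of the Boltzmann weight, matching the imaginary arguments $i\Omega$ and $i\Upsilon$ on the right-hand side. The rest is a routine application of the fact that an operator polynomial in $\{X_v\}_{v\in V}$ is diagonalised by the tensor product X-basis.
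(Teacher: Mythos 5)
Your proposal is correct and is essentially the same argument as the paper's: the paper conjugates to the $Z$-basis (replacing $X$ by $Z$ and $\ket{0^{\abs{V}}}$ by $\ket{+^{\abs{V}}}$) and then expands in the computational basis, which is exactly your diagonalisation in the $X$-eigenbasis followed by expanding $\ket{0^{\abs{V}}}$ over that basis and collapsing the double sum. The sign bookkeeping you flag ($-i\mathrm{H}_{\mathcal{X}_G}$ giving $+i$ in the exponent) is handled correctly.
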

We prove \mbox{Proposition~\ref{proposition:IQPIsingModelPartitionFunctionRelation}} in \mbox{Appendix~\ref{section:IQPIsingModelPartitionFunctionRelation}}. We now apply \mbox{Corollary~\ref{corollary:IsingModelPartitionFunctionBoundedDegreeGraphs}} to \mbox{Proposition~\ref{proposition:IQPIsingModelPartitionFunctionRelation}} to achieve a deterministic polynomial-time approximation scheme for computing $\psi_{G}$ for all graphs of bounded maximum degree with weights absolutely bounded sufficiently close to zero.
\begin{corollary}
    \label{corollary:IQPBoundedDegreeGraphs}
    Fix \mbox{$\Delta\in\mathbb{Z}^+$} and \mbox{$0<\delta<\delta_{\Delta+1}$}. There is a deterministic polynomial-time approximation scheme for the probability amplitude $\psi_{G}$ for all graphs \mbox{$G=(V,E)$} of maximum degree at most $\Delta$ with the edge weights \mbox{$\left\{\omega_e\in[-\pi,\pi]\right\}_{e \in E}$} satisfying \mbox{$\abs{\omega_e}\leq2\arcsin(\delta/2)$} for all \mbox{$e \in E$} and the vertex weights \mbox{$\left\{\upsilon_v\in[-\pi,\pi]\right\}_{v \in V}$} satisfying \mbox{$\abs{\upsilon_v}\leq2\arcsin(\delta/2)$} for all \mbox{$v \in V$}.
\end{corollary}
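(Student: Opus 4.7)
The plan is to combine Proposition~\ref{proposition:IQPIsingModelPartitionFunctionRelation} with Corollary~\ref{corollary:IsingModelPartitionFunctionBoundedDegreeGraphs} in the most direct way possible. By the proposition, $\psi_G = 2^{-|V|}\,\mathrm{Z}_\mathrm{Ising}(G; i\Omega, i\Upsilon)$, and since $2^{-|V|}$ is a fixed nonzero scalar computable exactly in time $O(|V|)$, any multiplicative $\epsilon$-approximation to the Ising partition function at the purely imaginary arguments $(i\Omega, i\Upsilon)$ immediately yields a multiplicative $\epsilon$-approximation to $\psi_G$. So the corollary reduces to checking that the hypothesis on the real angular weights $\omega_e, \upsilon_v$ implies that the complex weights $(i\Omega, i\Upsilon)$ land inside the polyregion $\mathcal{R}_G(\delta)$ of Corollary~\ref{corollary:IsingModelPartitionFunctionBoundedDegreeGraphs}.

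The key computation is the identity, valid for any real $\omega$,
\begin{align}
\bigl|1 - e^{\pm i\omega}\bigr| = 2\bigl|\sin(\omega/2)\bigr|, \notag
\end{align}
which I would verify by writing $1 - e^{\pm i\omega} = \mp e^{\pm i\omega/2}(e^{\pm i\omega/2} - e^{\mp i\omega/2})$ and taking absolute values. Consequently, the polyregion condition $|1 - e^{\pm i\omega_e}| \le \delta$ is equivalent to $|\sin(\omega_e/2)| \le \delta/2$, and since the hypothesis $0 < \delta < \delta_{\Delta+1} < 1$ keeps $\delta/2$ well within the domain where $\arcsin$ is monotone on $[0,1]$, this is equivalent to $|\omega_e| \le 2\arcsin(\delta/2)$. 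The same argument applies to the vertex weights $\upsilon_v$.

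Hence the hypothesis of the corollary is exactly the statement that $(i\Omega, i\Upsilon) \in \mathcal{R}_G(\delta)$, and since $G$ has maximum degree at most $\Delta$ and $0 < \delta < \delta_{\Delta+1}$, Corollary~\ref{corollary:IsingModelPartitionFunctionBoundedDegreeGraphs} supplies a deterministic polynomial-time approximation scheme for $\mathrm{Z}_\mathrm{Ising}(G; i\Omega, i\Upsilon)$; dividing by $2^{|V|}$ completes the argument. There is no real obstacle here: the proof is essentially the trigonometric identity above plus a reduction, and I would write it as a short two-paragraph derivation pointing out that the bound $2\arcsin(\delta/2)$ appearing in the statement is precisely the preimage of $\mathcal{R}_G(\delta)$ under the map $\omega \mapsto i\omega$.
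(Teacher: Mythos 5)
Your proposal is correct and follows exactly the paper's route: combine Proposition~\ref{proposition:IQPIsingModelPartitionFunctionRelation} with Corollary~\ref{corollary:IsingModelPartitionFunctionBoundedDegreeGraphs} and check that the weight bounds place $(i\Omega,i\Upsilon)$ in $\mathcal{R}_G(\delta)$. The paper's proof simply asserts the equivalence between \mbox{$\abs{1-e^{\pm i\omega}}\leq\delta$} and \mbox{$\abs{\omega}\leq2\arcsin(\delta/2)$}, whereas you spell out the identity \mbox{$\abs{1-e^{\pm i\omega}}=2\abs{\sin(\omega/2)}$}, which is a welcome extra detail but not a different argument.
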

\begin{proof}
    It follows from \mbox{Corollary~\ref{corollary:IsingModelPartitionFunctionBoundedDegreeGraphs}} and \mbox{Proposition~\ref{proposition:IQPIsingModelPartitionFunctionRelation}} that we
    have a deterministic polynomial-time approximation scheme for computing $\psi_{G}$ for all graphs of maximum degree at most $\Delta$ with \mbox{$\Omega=\{i\omega_e\}_{e \in E}$} and \mbox{$\Upsilon=\{i\upsilon_v\}_{v \in V}$} in the closed polyregion \mbox{$\mathcal{R}_G\left(\delta\right)$}. For weights in the range \mbox{$[-\pi,\pi]$}, this is achieved when \mbox{$\abs{\omega_e}\leq2\arcsin(\delta/2)$} for all \mbox{$e \in E$} and \mbox{$\abs{\upsilon_v}\leq2\arcsin(\delta/2)$} for all \mbox{$v \in V$}. This completes the proof.
\end{proof}

It is known that approximating $\psi_{G}$ up to a multiplicative factor for bounded degree graphs with arbitrary weights in \mbox{$[-\pi,\pi]$} is \mbox{\textsc{\#P}-hard}~\cite{fujii2017commuting}, and so it seems unlikely that \mbox{Corollary~\ref{corollary:IQPBoundedDegreeGraphs}} can be extended to hold in this case. We note that \mbox{Corollary~\ref{corollary:IQPBoundedDegreeGraphs}} applies to graph-induced X-programs with weights absolutely bounded by a constant that depends only on the maximum degree of the underlying graph. This corresponds to Hamiltonians that have been evolved for up to a constant time and Hamiltonians that exhibit limited interference. Furthermore, \mbox{Corollary~\ref{corollary:IQPBoundedDegreeGraphs}} applies to classes of graphs with treewidth growing as the square root of the number of vertices; for example, square lattices. For classes of graphs with logarithmic treewidth a deterministic polynomial-time algorithm is known~\cite{markov2008simulating}.

\section{Conclusion \& Outlook}
\label{section:ConclusionAndOutlook}

We have established a deterministic polynomial-time approximation scheme for the Ising model partition function with complex parameters on bounded degree graphs when the interactions and external fields are absolutely bounded by a constant depending on the maximum degree of the graph. Furthermore, we have proven that the partition function does not vanish for this class of Ising models. Finally, we have shown how our algorithm can be extended to approximate certain output probability amplitudes of quantum circuits.

There are a number of interesting problems that remain to be solved, the most obvious of which is to sharpen the bounds of \mbox{Corollary~\ref{corollary:IsingModelPartitionFunctionBoundedDegreeGraphs}}. One approach would be to improve \mbox{Lemma~\ref{lemma:GraphHomomorphismPartitionFunctionZeroFreeRegion}}, i.e., prove that the restricted multivariate graph homomorphism partition function does not vanish on a polydisc of a greater radius. It may also be possible to prove sharper bounds for specific graphs of interest. An alternative approach would be to use decay of correlation based arguments~\cite{weitz2006counting, sly2010computational, sinclair2014approximation}. It is an important open problem to understand the relationship between the location of complex zeros, decay of correlations, and the computational complexity of a function. The work of Liu, Sinclair, and Srivastava~\cite{liu2018fisher} makes significant progress towards resolving this problem.

\section*{Acknowledgements}

We thank Gavin Brennen, Jacob Bridgeman, Christopher Chubb, David Gosset, Richard Jozsa, and Hakop Pashayan for helpful discussions. This research was conducted by the ARC Centre of Excellence for Quantum Computation and Communication Technology (CQC2T), project number CE170100012.

\onecolumngrid

\appendix

\section{Proof of Theorem~\ref*{theorem:GraphHomomorphismPartitionFunctionBoundedDegreeGraphs}}
\label{section:GraphHomomorphismPartitionFunctionBoundedDegreeGraphs}

We shall now prove Theorem~\ref{theorem:GraphHomomorphismPartitionFunctionBoundedDegreeGraphs}.

\GraphHomomorphismPartitionFunctionBoundedDegreeGraphs*

\begin{proof}
    Define \mbox{$P(G,S,k;z):=\mathrm{Hom}_\mathrm{M}(G,S,k;\mathcal{A}(z))$}, with \mbox{$\mathcal{A}(z)=\{(1+z(a_{ij}^e-1))_{m \times m}\}_{e \in E}$} and note that \mbox{$\mathcal{A}=\mathcal{A}(1)$}. By Lemma~\ref{lemma:GraphHomomorphismPartitionFunctionZeroFreeRegion}, we have that $P(G,S,k;z)$ does not vanish whenever $\mathcal{A}(z)$ lies in the closed polydisc $\mathcal{D}_{G,m}(\delta_\Delta)$. Since $\mathcal{A}(1)$ lies in the closed polydisc $\mathcal{D}_{G,m}(\delta)$, $P(G,S,k;z)$ does not vanish for all \mbox{$\abs{z}\leq\delta_\Delta/\delta$}. Let $\{r_i\}_{i=1}^{\abs{E}}$ be the roots of $P(G,S,k;z)$. Then, by setting \mbox{$C=(1-\delta/\delta_\Delta)^{-1}$} in Lemma~\ref{lemma:GraphHomomorphismPartitionFunctionComputeConstantTermAndInversePowerSums}, we have that, for any \mbox{$0<\epsilon<1$}, there is a deterministic \mbox{$(\abs{V}/\epsilon)^{O(1)}$-time} algorithm for computing $P(G,S,k;0)$ and the inverse power sums \mbox{$\left\{\sum_{i=1}^{\abs{E}}r_i^{-j}\right\}_{j=1}^m$} for \mbox{$m=(1-\delta/\delta_\Delta)^{-1}\log(\abs{V}/\epsilon)$}. Then, it follows from Lemma~\ref{lemma:ApproximatePolynomialZeroFreeRegionInversePowerSums} that there is a deterministic \mbox{$(\abs{V}/\epsilon)^{O(1)}$-time} algorithm for computing a multiplicative \mbox{$\epsilon$-approximation} to $P(G,S,k;z)$ for all \mbox{$\abs{z}<\delta_\Delta/\delta$}. Since \mbox{$\delta<\delta_\Delta$}, we can take \mbox{$z=1$}. Hence, we have a deterministic polynomial-time algorithm for computing a multiplicative \mbox{$\epsilon$-approximation} to \mbox{$\mathrm{Hom}_\mathrm{M}(G,S,k;\mathcal{A})$}. This completes the proof.
\end{proof}

\section{Proof of Lemma~\ref*{lemma:GraphHomomorphismPartitionFunctionComputeConstantTermAndInversePowerSums}}
\label{section:GraphHomomorphismPartitionFunctionComputeConstantTermAndInversePowerSums}

We shall now prove Lemma~\ref{lemma:GraphHomomorphismPartitionFunctionComputeConstantTermAndInversePowerSums}. Our proof follows from a generalisation of a result due to Patel and Regts~\cite{patel2017deterministic} \mbox{(Lemma~\ref{lemma:ECBIRGCPComputeConstantTermAndInversePowerSums})} and an additional lemma \mbox{(Lemma~\ref{lemma:ECBIRGCPGraphHomomorphismPartitionFunction})}, which we prove in the remainder of the section.
\GraphHomomorphismPartitionFunctionComputeConstantTermAndInversePowerSums*
\begin{proof}
    The proof follows from combining \mbox{Lemma~\ref{lemma:ECBIRGCPComputeConstantTermAndInversePowerSums}} and \mbox{Lemma~\ref{lemma:ECBIRGCPGraphHomomorphismPartitionFunction}}.
\end{proof}

We shall begin with the following definitions.
\begin{definition}[$\mathcal{G}_n$]
    For $n\in\mathbb{Z}^+$, define $\mathcal{G}_n$ to be the collection of all edge-coloured graphs on at most $n$ vertices.
\end{definition}
\begin{definition}[{$G[U]$}]
    For a graph $G$ and a subset of vertices \mbox{$U \subseteq V(G)$}, define $G[U]$ to be the subgraph induced by $U$.
\end{definition}
\begin{definition}[$\mathrm{Ind}_\mathrm{C}(G,H)$]
    For two edge-coloured graphs $G$ and $H$, define \mbox{$\mathrm{Ind}_\mathrm{C}(G,H)$} to be the number of induced subgraphs of $G$ that are edge-colour isomorphic to $H$.
\end{definition}
\begin{definition}[Multiplicative graph polynomial]
    A graph polynomial $P(G;z)$ is said to be multiplicative if \mbox{$P(\varnothing;z)=1$} and \mbox{$P(G \cup H;z)=P(G;z)P(H;z)$} for any two graphs $G$ and $H$.
\end{definition}
\begin{definition}[Edge-coloured bounded induced graph counting polynomial~\cite{patel2017deterministic}]
    Let $P(G;z)$ be a multiplicative graph polynomial defined by \mbox{$P(G;z):=\sum_{n=0}^{d(G)}\alpha_{G,n}z^n$} with \mbox{$P(G;0)=1$}. We say that $P(G;z)$ is an edge-coloured bounded induced graph counting polynomial if there exists constants \mbox{$\mu,\nu\in\mathbb{Z}^+$}, such that (1) the coefficients $\alpha_{G,n}$ satisfy \mbox{$\alpha_{G,n}=\sum_{H\in\mathcal{G}_{\mu n}}\beta_{H,n}\mathrm{Ind}(H, G)$}, for certain $\beta_{H,n}$ and (2) the coefficients $\beta_{H,n}$ can be computed in time \mbox{$O\left(\nu^{\abs{V(H)}}\right)$}.
\end{definition}

Patel and Regts~\cite[Theorem 3.10]{patel2017deterministic} proved that, for any edge-coloured bounded induced graph counting polynomial, there is an efficient algorithm for computing the constant term and inverse power sums of its roots.
\begin{lemma}[Patel and Regts~\cite{patel2017deterministic}]
    \label{lemma:ECBIGCPComputeConstantTermAndInversePowerSums}
    Fix \mbox{$\Delta\in\mathbb{Z}^+$}, \mbox{$0<\epsilon<1$}, and \mbox{$C>0$}. Let \mbox{$G=(V,E)$} be an edge-coloured graph of maximum degree at most $\Delta$. Further let $P(G;z)$ be an edge-coloured bounded induced graph counting polynomial with roots \mbox{$\{r_i\}_{i=1}^{\deg(P)}$}. Then there is a deterministic \mbox{$(\abs{V}/\epsilon)^{O(1)}$-time} algorithm for computing $P(G,0)$ and the inverse power sums \mbox{$\left\{\sum_{i=1}^{\deg(P)}r_i^{-j}\right\}_{j=1}^m$} for \mbox{$m=C\log(\abs{V}/\epsilon)$}.
\end{lemma}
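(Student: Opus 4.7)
The plan is to reduce the computation of the constant term and the inverse power sums to the computation of Taylor coefficients of $P(G;z)$ near the origin, and to extract those coefficients from counts of \emph{connected} induced subgraphs of $G$ of at most logarithmic size, which can be enumerated in polynomial time on bounded degree graphs.

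First, multiplicativity applied to the empty graph forces $P(G;0)=1$, so the constant term is immediate. To obtain the inverse power sums I would use the identity
\begin{align}
\log P(G;z) = -\sum_{j\geq 1}\frac{p_j}{j}z^j, \qquad p_j:=\sum_{i=1}^{\deg(P)}r_i^{-j}, \notag
\end{align}
which is valid in a neighborhood of the origin because $P(G;0)=1\neq 0$. Equivalently, the Newton identities convert the first $m+1$ Taylor coefficients $\alpha_{G,0},\ldots,\alpha_{G,m}$ of $P(G;z)$ into the power sums $p_1,\ldots,p_m$ in time $\mathrm{poly}(m)$. It therefore suffices to compute $\alpha_{G,1},\ldots,\alpha_{G,m}$ for $m=C\log(\abs{V}/\epsilon)$.

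By the defining hypothesis, $\alpha_{G,n}=\sum_{H\in\mathcal{G}_{\mu n}}\beta_{H,n}\mathrm{Ind}(H,G)$ with each $\beta_{H,n}$ computable in time $O(\nu^{\abs{V(H)}})$; since every relevant $H$ has $\abs{V(H)}\leq\mu m=O(\log(\abs{V}/\epsilon))$, each $\beta_{H,n}$ takes time $(\abs{V}/\epsilon)^{O(1)}$. The substantive step is evaluating $\mathrm{Ind}(H,G)$ for all such $H$. Multiplicativity of $P$ makes $\log P(G;z)$ additive over disjoint unions of graphs, and a standard exponential-formula argument then shows that the Taylor coefficients of $\log P(G;z)$ are linear combinations of $\mathrm{Ind}(H',G)$ with $H'$ ranging only over \emph{connected} edge-coloured graphs on at most $\mu m$ vertices. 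I would compute these connected induced subgraph counts using the algorithm of Borgs et al.~\cite{borgs2013left}, which on a graph of maximum degree at most $\Delta$ enumerates every connected induced subgraph on $k$ vertices in total time $\abs{V}\cdot(e\Delta)^{k}$; summed over $k\leq\mu m$ this is $(\abs{V}/\epsilon)^{O(1)}$.

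The main obstacle is making the passage from the defining expansion $\alpha_{G,n}=\sum_{H}\beta_{H,n}\mathrm{Ind}(H,G)$, which includes disconnected $H$, to a purely connected-subgraph expansion for the coefficients of $\log P$ whose weights are themselves computable in polynomial time from the $\beta_{H,n}$. I would handle this by a Möbius-inversion argument on the lattice of set partitions of $V(H)$: writing $\log P$ as an additive graph invariant and expanding each disconnected contribution as a signed combination of products of connected contributions yields explicit formulas for the connected weights. Once these are in hand, the algorithm assembles the first $m$ Taylor coefficients of $\log P(G;z)$ in polynomial time, reads off $p_1,\ldots,p_m$ via the identity displayed above, and returns these alongside $P(G;0)=1$.
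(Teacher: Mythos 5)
The paper does not actually prove this lemma; it is quoted verbatim from Patel and Regts~\cite[Theorem~3.10]{patel2017deterministic}, and your outline reconstructs their argument faithfully at the structural level: $P(G;0)=1$ by definition, Newton's identities (equivalently the expansion $\log P(G;z)=-\sum_{j\ge 1}p_jz^j/j$) to pass between low-order Taylor coefficients and inverse power sums, additivity of $\log P$ over disjoint unions to force a \emph{connected}-only induced-subgraph expansion of the $p_j$, and the Borgs et al.\ enumeration of connected induced subgraphs of logarithmic size. You also correctly identify the crux: the defining expansion of $\alpha_{G,n}$ ranges over disconnected $H$, and using it directly gives only quasi-polynomial time.

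The one step that would fail as written is the extraction of the connected weights by ``M\"obius inversion on the lattice of set partitions of $V(H)$.'' The partition lattice of a set of size $k=\mu m=\Theta(\log(\abs{V}/\epsilon))$ has $B_k=e^{\Theta(k\log k)}=\abs{V}^{\Theta(\log\log\abs{V})}$ elements, so a sum over all set partitions is super-polynomial and you would land back at quasi-polynomial time, defeating the purpose of the lemma. The inversion has to be carried out over the induced-subgraph (subset) poset instead: since $p_j(G)=\sum_{H}\lambda_{H,j}\mathrm{Ind}(H,G)$ with the $\mathrm{Ind}(H,\cdot)$ linearly independent, evaluating at $G=H$ gives $p_j(H)=\sum_{U\subseteq V(H)}\lambda_{H[U],j}$, whence
\begin{align}
\lambda_{H,j}=\sum_{U\subseteq V(H)}(-1)^{\abs{V(H)\setminus U}}\,p_j\bigl(H[U]\bigr), \notag
\end{align}
where each $p_j(H[U])$ is obtained from Newton's identities applied to the small graph $H[U]$, whose coefficients $\alpha_{H[U],i}$ are computed from the defining expansion restricted to the $2^{O(k)}=(\abs{V}/\epsilon)^{O(1)}$ induced subgraphs of $H[U]$. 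This is precisely how Patel and Regts make the connected-weight computation polynomial; with that substitution your argument goes through.
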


We shall now generalise the result of Patel and Regts~\cite{patel2017deterministic} to the restricted case, that is, where the graph polynomial is restricted to map a subset of vertices to a fixed index. We begin by extending the previous definitions.
\begin{definition}[Restricted graph]
    A restricted graph is a pair \mbox{$(G,S)$}, where \mbox{$G=(V,E)$} is a graph and \mbox{$S \subseteq V$} is a subset of $V$.
\end{definition}
\begin{definition}[$\mathcal{R}_n$]
    For $n\in\mathbb{Z}^+$, define $\mathcal{R}_n$ to be the collection of all edge-coloured restricted graphs on at most $n$ vertices.
\end{definition}
\begin{definition}[Induced restricted subgraph]
    For a restricted graph \mbox{$(G,S)$} and a subset of vertices \mbox{$U \subseteq V(G)$}, the restricted subgraph induced by $U$ is given by \mbox{$(G[U],S \cap U)$}.
\end{definition}
\begin{definition}[Isomorphic restricted graphs]
    Two restricted graphs \mbox{$(G,S)$} and \mbox{$(H,T)$} are said to be isomorphic if and only if there is an isomorphism $\varphi$ from $G$ to $H$ and $T$ is the image of $S$ under $\varphi$.
\end{definition}
\begin{definition}[{$\mathrm{Ind}_\mathrm{C}\left[(G,S),(H,T)\right]$}]
    For two edge-coloured restricted graphs \mbox{$(G,S)$} and \mbox{$(H,T)$}, define \mbox{$\mathrm{Ind}_\mathrm{C}\left[(G,S),(H,T)\right]$} to be the number of induced restricted subgraphs of \mbox{$(G,S)$} that are edge-colour isomorphic to \mbox{$(H,T)$}.
\end{definition}
\begin{definition}[Multiplicative restricted graph polynomial]
    A restricted graph polynomial \mbox{$P(G,S,k;z)$} is said to be multiplicative if \mbox{$P(\varnothing,\varnothing,k;z)=1$} and \mbox{$P(G \cup H,S \cup T,k;z)=P(G,S,k;z)P(H,T,k;z)$} for any two restricted graphs \mbox{$(G,S)$} and \mbox{$(H,T)$} and integer $k\in\mathbb{Z}^+$.
\end{definition}
\begin{definition}[Edge-coloured bounded induced restricted graph counting polynomial]
    Let \mbox{$P(G,S,k;z)$} be a multiplicative restricted graph polynomial defined by \mbox{$P(G,S,k;z):=\sum_{n=0}^{d(G)}\alpha_{G,S,k,n}z^n$} with \mbox{$P(G,S,k;0)=1$}. We say that \mbox{$P(G,S,k;z)$} is an edge-coloured bounded induced restricted graph counting polynomial if there exists constants \mbox{$\mu,\nu\in\mathbb{Z}^+$}, such that (1) the coefficients $\alpha_{G,S,k,n}$ satisfy \mbox{$\alpha_{G,S,k,n}=\sum_{(H,T)\in\mathcal{R}_{\mu n}}\beta_{H,T,k,n}\mathrm{Ind}_\mathrm{C}\left[(G,S),(H,T)\right]$}, for certain $\beta_{H,T,k,n}$ and (2) the coefficients $\beta_{H,T,k,n}$ can be computed in time \mbox{$O\left(\nu^{\abs{V(H)}}\right)$}.
\end{definition}
The restricted version of Lemma~\ref{lemma:ECBIGCPComputeConstantTermAndInversePowerSums} is then obtained by following the proof of Patel and Regts~\cite{patel2017deterministic} with the definitions extended in the natural way. We omit the proof for the sake of brevity.
\begin{lemma}
    \label{lemma:ECBIRGCPComputeConstantTermAndInversePowerSums}
    Fix \mbox{$\Delta\in\mathbb{Z}^+$}, \mbox{$0<\epsilon<1$}, and \mbox{$C>0$}. Let \mbox{$G=(V,E)$} be an edge-coloured graph of maximum degree at most $\Delta$. Further let \mbox{$P(G,S,k;z)$} be an edge-coloured bounded induced restricted graph counting polynomial with roots \mbox{$\{r_i\}_{i=1}^{\deg(P)}$}. Then there is a deterministic \mbox{$(\abs{V}/\epsilon)^{O(1)}$-time} algorithm for computing \mbox{$P(G,S,k,0)$} and the inverse power sums \mbox{$\left\{\sum_{i=1}^{\deg(P)}r_i^{-j}\right\}_{j=1}^m$} for \mbox{$m=C\log(\abs{V}/\epsilon)$}.
\end{lemma}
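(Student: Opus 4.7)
The plan is to mirror the proof of Patel and Regts~\cite{patel2017deterministic} of \mbox{Lemma~\ref{lemma:ECBIGCPComputeConstantTermAndInversePowerSums}}, substituting restricted graphs for graphs throughout. The argument rests on three ingredients: a multiplicativity-to-connectivity reduction via the logarithm, efficient enumeration of connected induced subgraphs of logarithmic size in bounded degree graphs due to Borgs et al.~\cite{borgs2013left}, and the identity expressing the inverse power sums of a polynomial's roots as (up to a sign and factor of $j$) the coefficients of its logarithm when the constant term is one.

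The first step is to exploit multiplicativity. Since $P(G,S,k;z)$ is a multiplicative restricted graph polynomial with $P(G,S,k;0)=1$, its logarithm $Q(G,S,k;z) := \log P(G,S,k;z) = \sum_{n \geq 1} \gamma_{G,S,k,n} z^n$ is additive over the connected components of $G$, each inheriting the restriction $S \cap V(G_i)$. Starting from the defining expansion $\alpha_{G,S,k,n} = \sum_{(H,T) \in \mathcal{R}_{\mu n}} \beta_{H,T,k,n}\,\mathrm{Ind}_\mathrm{C}[(G,S),(H,T)]$ and applying a standard inclusion--exclusion/M\"obius-type argument, I would show that $\gamma_{G,S,k,n} = \sum_{(H,T)} \tilde{\beta}_{H,T,k,n}\,\mathrm{Ind}_\mathrm{C}[(G,S),(H,T)]$, with the sum now restricted to \emph{connected} restricted graphs on at most $\mu n$ vertices and modified coefficients $\tilde{\beta}_{H,T,k,n}$ still computable in time $O(\nu^{\abs{V(H)}})$. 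The second step is to compute these $\mathrm{Ind}_\mathrm{C}$ counts. The Borgs et al. algorithm enumerates every connected induced subgraph of $G$ of size at most $s$ in time $\abs{V}\cdot f(\Delta,s)$; because $S \subseteq V$ is fixed, each connected induced subgraph $U$ canonically determines a connected induced restricted subgraph $(G[U], S \cap U)$, so the same procedure handles the restricted setting at no extra cost. Grouping the output by edge-colour isomorphism classes of restricted graphs yields all required counts for $\abs{V(H)} \leq \mu m = O(\log(\abs{V}/\epsilon))$ in polynomial time.

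The proof is completed by noting that $P(G,S,k;0) = 1$ is immediate by definition and that $P(G,S,k;z) = \prod_i (1 - z/r_i)$, so $\sum_i r_i^{-j} = -j\,\gamma_{G,S,k,j}$, meaning the desired inverse power sums are read off directly from the computed $\gamma_{G,S,k,j}$. The main point I expect to require genuine verification is that multiplicativity truly localises after taking the logarithm once a subset $S$ is carried along; however, this is straightforward because disjoint union of restricted graphs splits $S$ coordinate-wise, so the connectivity-based truncation of the $\beta$-coefficients, which is the only nontrivial ingredient of the unrestricted argument, transcribes verbatim. This is why it is reasonable to defer to Patel and Regts and omit the detailed proof for brevity.
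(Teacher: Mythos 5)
Your proposal is correct and follows essentially the same route as the paper, which itself omits the proof and simply asserts that the argument of Patel and Regts carries over with the definitions extended in the natural way; you have filled in exactly that sketch (log-additivity over connected components, connected induced restricted subgraph counts via Borgs et al., and the identity $\sum_i r_i^{-j}=-j\gamma_{G,S,k,j}$). Your observation that the restriction set $S$ is harmless because disjoint union splits $S$ coordinate-wise is precisely the point that justifies the paper's claim that the proof "transcribes verbatim."
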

\begin{lemma}
    \label{lemma:ECBIRGCPGraphHomomorphismPartitionFunction}
    Let \mbox{$G=(V,E)$} be a graph with the $m \times m$ symmetric matrices \mbox{$\mathcal{A}(z)=\{(1+z(a_{ij}^e-1))_{m \times m}\}_{e \in E}$} assigned to its edges and let each edge \mbox{$e \in E$} be assigned a distinct colour. Further let \mbox{$S \subseteq V$} be a subset of $V$ and let \mbox{$k\in[m]$} be an integer. Then, up to an efficiently computable factor, the restricted multivariate graph homomorphism partition function \mbox{$\mathrm{Hom}_\mathrm{M}(G,S,k;\mathcal{A}(z))$} is an edge-coloured bounded induced graph counting polynomial. 
\end{lemma}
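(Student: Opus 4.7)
The plan is to divide out the efficiently computable factor $m^{|V \setminus S|} = \mathrm{Hom}_\mathrm{M}(G,S,k;\mathcal{A}(0))$ and verify that the normalised polynomial $P(G,S,k;z) := m^{-|V \setminus S|}\mathrm{Hom}_\mathrm{M}(G,S,k;\mathcal{A}(z))$ meets the definition of an edge-coloured bounded induced restricted graph counting polynomial with $\mu = 2$ and a constant $\nu$ depending only on $\Delta$ and $m$.

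First I would dispatch the two structural requirements. Evaluating at $z=0$ makes every matrix all-ones, so $\mathrm{Hom}_\mathrm{M}(G,S,k;\mathcal{A}(0)) = m^{|V \setminus S|}$ and $P(G,S,k;0)=1$. Multiplicativity on disjoint unions follows because admissible homomorphisms split across components and the exponent $|V \setminus S|$ is additive.

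The heart of the argument is a distributive-law expansion. Writing $b_{ij}^e := a_{ij}^e - 1$, I obtain
\begin{align*}
    P(G,S,k;z) = m^{-|V\setminus S|}\sum_{F \subseteq E} z^{|F|} \sum_{\substack{\phi:V\to[m]\\ \phi(s)=k,\forall s \in S}} \prod_{\{u,v\} \in F} b_{\phi(u)\phi(v)}^{\{u,v\}}.
\end{align*}
For fixed $F$, the product depends only on $\phi|_{V(F)}$, so summing $\phi$ over $V \setminus (V(F) \cup S)$ contributes a factor $m^{|V \setminus (V(F) \cup S)|}$ that combines with the normalisation to yield $m^{-|V(F) \setminus S|}$ in front of the reduced sum $C_F := \sum_{\phi : V(F) \to [m],\, \phi(s) = k\, \forall s \in V(F) \cap S} \prod_{e \in F} b_{\phi(u)\phi(v)}^e$. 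I would then regroup by $U := V(F)$, letting $F$ range over subsets of $E(G[U])$ that span $U$; since every edge of $G$ carries a distinct colour, the resulting contribution from each such $U$ is an intrinsic function of the edge-coloured restricted graph $(G[U], S \cap U)$ only, and distinct $U$'s without isolated vertices give pairwise non-isomorphic edge-coloured restricted graphs, so $\mathrm{Ind}_\mathrm{C}[(G,S),(H,T)] \in \{0,1\}$ and summing over $U$ is tantamount to summing over isomorphism classes $(H,T)$.

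Setting $\beta_{H,T,k,n} := m^{-|V(H) \setminus T|} \sum_{F \subseteq E(H),\, V(F) = V(H),\, |F| = n} C_F^H$ then gives the required expansion $\alpha_{G,S,k,n} = \sum_{(H,T) \in \mathcal{R}_{2n}} \beta_{H,T,k,n}\, \mathrm{Ind}_\mathrm{C}[(G,S),(H,T)]$, where the cutoff $|V(H)| \leq 2n$ follows because any $F$ with $|F| = n$ spans at most $2n$ vertices, giving $\mu = 2$. For the computational condition, enumerating $F \subseteq E(H)$ takes at most $2^{\Delta |V(H)|/2}$ steps and each $C_F^H$ is a sum of at most $m^{|V(H)|}$ products, so $\beta_{H,T,k,n}$ is evaluated in $(2^{\Delta/2} m)^{O(|V(H)|)}$ time. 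The main obstacle is the bookkeeping in converting the edge-subset sum into an induced-subgraph sum: the naive expansion produces subgraphs rather than induced subgraphs, and it is precisely the distinct edge-colouring (which makes $\mathrm{Ind}_\mathrm{C}$ a $\{0,1\}$-valued indicator) combined with grouping by $V(F)$ (which forces no-isolated-vertex induced subgraphs) that aligns the expansion with the Patel--Regts framework.
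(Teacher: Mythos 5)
Your proposal is correct and follows the same overall strategy as the paper's proof: normalise by the efficiently computable factor $m^{|V\setminus S|}=\mathrm{Hom}_\mathrm{M}(G,S,k;\mathcal{A}(0))$, expand the edge product into a sum over edge subsets $F\subseteq E$, localise the spin sum to $V(F)$, and read off $\mu=2$ from $|V(F)|\leq 2|F|$. The one place you genuinely diverge is the regrouping step. The paper indexes the $z^n$ coefficient by the edge-induced subgraphs $G[F]$ themselves, taking $\beta_{H,T,k,n}$ to be a single product over the $n$ edges of $H$ paired with $\mathrm{Ind}_\mathrm{C}$ (so $\nu=m$); this glosses over the fact that $G[F]$ need not be a vertex-induced subgraph of $G$. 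You instead group by $U=V(F)$, so that $H=G[U]$ really is induced, and fold the sum over spanning edge subsets $F\subseteq E(H)$ with $|F|=n$ into $\beta_{H,T,k,n}$; the distinct edge colours then make $\mathrm{Ind}_\mathrm{C}[(G,S),(H,T)]\in\{0,1\}$ and the identification of vertex subsets with isomorphism classes exact. Your version costs an extra factor of at most $2^{|E(H)|}\leq 2^{\Delta|V(H)|/2}$ in computing $\beta_{H,T,k,n}$, which is absorbed into the constant $\nu$ and is harmless; what it buys is that the expansion literally matches the definition of $\mathrm{Ind}_\mathrm{C}$ in terms of vertex-induced subgraphs, which is the form the machinery of Lemma~\ref{lemma:ECBIRGCPComputeConstantTermAndInversePowerSums} actually consumes. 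Your supporting checks ($P(G,S,k;0)=1$, multiplicativity, and the $\{0,1\}$-valuedness via the absence of isolated vertices together with the distinct colouring) are all sound.
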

\begin{proof}
    Define \mbox{$P(G,S,k;z)$} by
    \begin{align}
        P(G,S,k;z) := m^{-\abs{V \setminus S}}\mathrm{Hom}_\mathrm{M}(G,S,k;\mathcal{A}(z)). \notag
    \end{align}
    Then,
    \begin{align}
        P(G,S,k;z) &= m^{-\abs{V \setminus S}}\sum_{\substack{\phi:V\to[m]\\\phi(s)=k,\forall s \in S}}\prod_{\{u,v\}\in E}\left[1+z\left(a_{\phi(u)\phi(v)}^{\{u,v\}}-1\right)\right] \notag \\
        &= m^{-\abs{V \setminus S}}\sum_{n=0}^{\abs{E}}z^n\sum_{\substack{F \subseteq E\\\abs{F}=n}}\left[\sum_{\substack{\phi:V\to[m]\\\phi(s)=k,\forall s \in S}}\prod_{\{u,v\}\in F}\left(a_{\phi(u)\phi(v)}^{\{u,v\}}-1\right)\right] \notag \\
        &= \sum_{n=0}^{\abs{E}}z^n\sum_{\substack{F \subseteq E\\\abs{F}=n}}\left[m^{-\abs{V(G[F]) \setminus S}}\sum_{\substack{\phi:V(G[F])\to[m]\\\phi(s)=k,\forall s \in 
        (S \cap V(G[F]))}}\prod_{\{u,v\}\in F}\left(a_{\phi(u)\phi(v)}^{\{u,v\}}-1\right)\right], \notag
    \end{align}
    where $G[F]$ is the subgraph of $G$ induced by $F$. Since the number of vertices in $G[F]$ is at most $2\abs{F}$, we can write
    \begin{align}
        P(G,S,k;z) = \sum_{n=0}^{\abs{E}}z^n\sum_{\substack{(H,T)\in\mathcal{R}_{2n}\\\abs{E(H)}=n}}\left[m^{-\abs{V(H) \setminus T}}\sum_{\substack{\phi:V(H)\to[m]\\\phi(t)=k,\forall t \in 
        T}}\prod_{\{u,v\}\in E(H)}\left(a_{\phi(u)\phi(v)}^{\{u,v\}}-1\right)\right]\mathrm{Ind}_\mathrm{C}\left[(G,S),(H,T)\right]. \notag
    \end{align}
    Therefore, we have
    \begin{align}
        P(G,S,k;z) = \sum_{n=0}^{\abs{E}}\alpha_{G,S,k,n}z^n, \notag
    \end{align}
    with
    \begin{align}
        \alpha_{G,S,k,n} = \sum_{\substack{(H,T)\in\mathcal{R}_{2n}\\\abs{E(H)}=n}}\beta_{H,T,k,n}\mathrm{Ind}_\mathrm{C}\left[(G,S),(H,T)\right] \notag
    \end{align}
    and
    \begin{align}
        \beta_{H,T,k,n} = m^{-\abs{V(H) \setminus T}}\sum_{\substack{\phi:V(H)\to[m]\\\phi(t)=k,\forall t \in 
        T}}\prod_{\{u,v\}\in E(H)}\left(a_{\phi(u)\phi(v)}^{\{u,v\}}-1\right). \notag
    \end{align}
    It is clear that \mbox{$P(G,S,k;z)$} is a multiplicative restricted graph polynomial with \mbox{$P(G,S,k;0)=1$}. Furthermore, for any restricted graph \mbox{$(H,T)\in\mathcal{R}_{2n}$}, the coefficients $\beta_{H,T,k,n}$ can be computed in time \mbox{$O\left(m^{\abs{V(H) \setminus S}}\right)$}. Hence, \mbox{$P(G,S,k;z)$} is an edge-coloured bounded induced restricted graph counting polynomial with constants \mbox{$\mu=2$} and \mbox{$\nu=m$}. This completes the proof.
\end{proof}

\section{Proof of Lemma~\ref*{lemma:ApproximatePolynomialZeroFreeRegionInversePowerSums}}
\label{section:ApproximatePolynomialZeroFreeRegionInversePowerSums}

We shall now prove \mbox{Lemma~\ref{lemma:ApproximatePolynomialZeroFreeRegionInversePowerSums}}. The lemma is due to Barvinok~\cite{barvinok2015computing, barvinok2016computing, barvinok2016combinatorics}, however, our proof closely follows that of Patel and Regts~\cite{patel2017deterministic}.
\ApproximatePolynomialZeroFreeRegionInversePowerSums*
\begin{proof}
    Define the function $f(z)$ on the closed disc $D(\delta)$ by
    \begin{align}
        f(z) := \log(p(z)), \notag
    \end{align}
    where the branch of the logarithm is chosen by taking the principal value at $p(0)$. By Taylor's Theorem about the point $t=0$, for each $t$ in the interior of the closed disc $D(\delta)$,
    \begin{align}
        f(t) = \sum_{j=0}^\infty\frac{t^j}{j!}f^{(j)}(0). \notag
    \end{align}
    Define the Taylor expansion truncated at order $m$ by 
    \begin{align}
        T_m(f)(t) := f(0)+\sum_{j=1}^m\frac{t^j}{j!}f^{(j)}(0). \notag
    \end{align}
    Now, let us write $p(z)$ in terms of its roots. By the Factor Theorem,
    \begin{align}
        p(z) = a_n\prod_{i=1}^n(z-r_i). \notag
    \end{align}
    Then,
    \begin{align}
        f(z) = \log(a_n)+\sum_{i=1}^n\log(z-r_i). \notag
    \end{align}
    Therefore,
    \begin{align}
        f^{(j)}(0) = -(j-1)!\sum_{i=1}^nr_i^{-j}. \notag
    \end{align}
    Let $s_j$ be the $j^{\mathrm{th}}$ inverse power sum given by
    \begin{align}
        s_j := \sum_{i=1}^{n}r_i^{-j}. \notag
    \end{align}
    Then, by noting that \mbox{$f(0)=\log(a_0)$},
    \begin{align}
        T_m(f)(t) = \log(a_0)-\sum_{j=1}^m\frac{s_jt^j}{j}. \notag
    \end{align}
    We shall now show that, for any \mbox{$0<\epsilon<1$}, the Taylor expansion truncated at order \mbox{$m=O(\log(n/\epsilon))$} gives an additive \mbox{$\epsilon$-approximation} to $f(t)$.
    \begin{align}
        \abs{f(t)-T_m(f)(t)} &\leq \abs{\sum_{j=m+1}^\infty\frac{s_jt^j}{j}} \notag \\
        &\leq \frac{1}{m+1}\sum_{j=m+1}^\infty\abs{s_jt^j}. \notag
    \end{align}
    Since the roots \mbox{$\{r_i\}_{i=1}^n$} lie in the exterior of the closed disc $D(\delta)$, we have \mbox{$\abs{s_j}<n/\delta^j$}. Therefore, 
    \begin{align}
        \abs{f(t)-T_m(f)(t)} \leq \frac{n}{m+1}\sum_{j=m+1}^\infty\left(\frac{\abs{t}}{\delta}\right)^j. \notag
    \end{align}
    Since \mbox{$\abs{t}<\delta$}, by the geometric series formula,
    \begin{align}
        \abs{f(t)-T_m(f)(t)} \leq \frac{n(\abs{t}/\delta)^{m+1}}{(m+1)(1-\abs{t}/\delta)}. \notag
    \end{align}
    Taking \mbox{$m=(1-\abs{t}/\delta)^{-1}\log(n/\epsilon)$}, it follows that
    \begin{align}
        \abs{f(t)-T_m(f)(t)} \leq \epsilon. \notag
    \end{align}
    We shall now show that the truncated Taylor expansion is a multiplicative \mbox{$\epsilon$-approximation} to $p(t)$. For the norm, we have
    \begin{align}
        \abs{e^{T_m(f)(t)-f(t)}} &\leq e^{\abs{T_m(f)(t)-f(t)}} \notag \\
        &\leq e^\epsilon, \notag
    \end{align}
    and
    \begin{align}
        \abs{e^{f(t)-T_m(f)(t)}} \leq e^\epsilon. \notag
    \end{align}
    Now, for the argument,
    \begin{align}
        \abs{\operatorname{Arg}\left(e^{T_m(f)(t)-f(t)}\right)} &= \abs{\operatorname{Im}\left[\log\left(e^{f(t)-T_m(f)(t)}\right)\right]} \notag \\
        &\leq \abs{\log\left(e^{f(t)-T_m(f)(t)}\right)} \notag \\
        &\leq \epsilon. \notag
    \end{align}
    This completes the proof.
\end{proof}

\section{Proof of Proposition~\ref*{proposition:IsingModelGraphHomomorphismPartitionFunctionReduction}}
\label{section:IsingModelGraphHomomorphismPartitionFunctionReduction}

We shall now prove \mbox{Proposition~\ref{proposition:IsingModelGraphHomomorphismPartitionFunctionReduction}}.
\IsingModelGraphHomomorphismPartitionFunctionReduction*
\begin{proof}
    Let \mbox{$G=(V,E)$} be a graph with the $2 \times 2$ symmetric matrices \mbox{$\mathcal{A}=\{(a_{ij}^e)_{2 \times 2}\}_{e \in E}$} assigned to its edges. Let us construct a new graph $G'$ from $G$ by the following vertex gadget. For every vertex \mbox{$v \in V$}, add a new vertex $s_v$ and an edge \mbox{$e_v=\{v, s_v\}$} with a $2 \times 2$ symmetric matrix \mbox{$(b_{ij}^{e_v})_{2 \times 2}$} assigned to it. Let \mbox{$S=\{s_v\}_{v \in V}$}, and let \mbox{$\mathcal{B}=\{(b_{ij}^{e_v})_{2 \times 2}\}_{v \in V}$}. Then,
    \begin{align}
        \mathrm{Hom}_\mathrm{M}(G',S,2;\mathcal{A}\cup\mathcal{B}) &= \sum_{\substack{\phi:V(G')\to[2]\\\phi(s)=2,\forall s \in S}}\prod_{\{u,v\} \in E(G)}a_{\phi(u)\phi(v)}^{\{u,v\}}\prod_{v \in V(G)}b_{\phi(v)\phi(s_v)}^{e_v} \notag \\
        &= \sum_{\phi:V(G)\to[2]}\prod_{\{u,v\}\in E(G)}a_{\phi(u)\phi(v)}^{\{u,v\}}\prod_{v \in V(G)}b_{\phi(v)(2)}^{e_v}. \notag
    \end{align}
    Taking \mbox{$a_{ij}^e=\exp\left[\omega_e(2i-3)(2j-3)\right]$} and \mbox{$b_{ij}^{e_v}=\exp\left[\upsilon_v(2i-3)(2j-3)\right]$},
    \begin{align}
        \mathrm{Hom}_\mathrm{M}(G',S,2;\mathcal{A}\cup\mathcal{B}) &= \sum_{\phi:V(G)\to\{-1,+1\}}\exp\left(\sum_{\{u,v\} \in E(G)}\omega_{\{u,v\}}\phi(u)\phi(v)+\sum_{v \in V(G)}\upsilon_v\phi(v)\right) \notag \\
        &= \sum_{\sigma\in\{-1,+1\}^V}\exp\left(\sum_{\{u,v\} \in E(G)}\omega_{\{u,v\}}\sigma_u\sigma_v+\sum_{v \in V(G)}\upsilon_v\sigma_v\right) \notag \\
        &= \mathrm{Z}_\mathrm{Ising}(G;\Omega,\Upsilon), \notag
    \end{align}
    where \mbox{$\Omega=\{\omega_e\}_{e \in E}$} and \mbox{$\Upsilon=\{\upsilon_v\}_{v \in V}$}. Hence, we have an approximation-preserving polynomial-time reduction from the Ising model partition function to the restricted multivariate graph homomorphism partition function. This completes the proof.
\end{proof}

\section{Proof of Proposition~\ref*{proposition:IQPIsingModelPartitionFunctionRelation}}
\label{section:IQPIsingModelPartitionFunctionRelation}

We shall now prove \mbox{Proposition~\ref{proposition:IQPIsingModelPartitionFunctionRelation}}.
\IQPIsingModelPartitionFunctionRelation*
\begin{proof}
    By definition,
    \begin{align}
        \psi_{G} &= \bra{0^{\abs{V}}}\exp\left(i\sum_{\{u,v\} \in E}\omega_{\{u,v\}}X_uX_v+i\sum_{v \in V}\upsilon_vX_v\right)\ket{0^{\abs{V}}} \notag \\
        &= \bra{+^{\abs{V}}}\exp\left(i\sum_{\{u,v\} \in E}\omega_{\{u,v\}}Z_uZ_v+i\sum_{v \in V}\upsilon_vZ_v\right)\ket{+^{\abs{V}}} \notag \\
        &= \frac{1}{2^{\abs{V}}}\sum_{x,y\in\{0,1\}^V}\bra{y}\exp\left(i\sum_{\{u,v\} \in E}\omega_{\{u,v\}}Z_uZ_v+i\sum_{v \in V}\upsilon_vZ_v\right)\ket{x} \notag \\
        &= \frac{1}{2^{\abs{V}}}\sum_{x\in\{0,1\}^V}\exp\left(i\sum_{\{u,v\} \in E}\omega_{\{u,v\}}(-1)^{x_u \oplus x_v}+i\sum_{v \in V}\upsilon_v(-1)^{x_v}\right) \notag \\
        &= \frac{1}{2^{\abs{V}}}\sum_{z\in\{-1,+1\}^V}\exp\left(i\sum_{\{u,v\} \in E}\omega_{\{u,v\}}z_uz_v+i\sum_{v \in V}\upsilon_vz_v\right) \notag \\
        &= \frac{1}{2^{\abs{V}}}\mathrm{Z}_\mathrm{Ising}(G;i\Omega,i\Upsilon). \notag
    \end{align}
    This completes the proof.
\end{proof}

\twocolumngrid

\bibliography{bibliography}

\end{document}